\documentclass[aps,prx,groupedaddress,notitlepage,twocolumn,longbibliography,10pt]{revtex4-2}

\usepackage{amsmath,amssymb,amsthm,bm}
\usepackage{graphicx}
\usepackage[colorinlistoftodos]{todonotes}
\usepackage[inline]{enumitem}
\definecolor{darkblue}{rgb}{0.1,0.2,0.6}
\definecolor{darkred}{rgb}{0.8,0.1,0.2}
\definecolor{crimson}{RGB}{164,16,52}
\definecolor{darkgreen}{rgb}{0.31,0.62,0.24}
\usepackage[colorlinks=true, allcolors=crimson]{hyperref}
\usepackage{diagbox}
\usepackage{epigraph}
\usepackage{float}
\usepackage{multirow}
\usepackage{tikz}
\usepackage{mathtools}
\usepackage{braket}
\usepackage{soul} 
\usepackage[all]{xy}
\usepackage{lipsum}
\usepackage[export]{adjustbox}

\usepackage{bbm}

\newcommand{\Rom}[1]{\uppercase\expandafter{\romannumeral#1}}

\newcommand{\mc}{\mathcal}

\DeclareMathOperator{\Tr}{Tr}

\DeclareMathOperator{\CZ}{CZ}

\makeatletter
\renewcommand*\env@matrix[1][*\c@MaxMatrixCols c]{%
	\hskip -\arraycolsep
	\let\@ifnextchar\new@ifnextchar
	\array{#1}}
\makeatother

\newtheorem*{claim*}{Claim}

\newtheorem*{proposition*}{Proposition}

\newcommand{\ScalingFactor}{0.8333}

\usepackage[normalem]{ulem}

\definecolor{darkred}{rgb}{0.8,0.1,0.2}

\begin{document}
\title{Constructing Bulk Topological Orders via Layered Gauging}
\author{Shang Liu}
\email{sliu.phys@gmail.com}
\affiliation{Institute of Physics,
Chinese Academy of Sciences, Beijing 100910, China}

\begin{abstract}
Understanding quantum phases and phase transitions in the presence of symmetries is a central objective of quantum many-body physics. A powerful modern paradigm for investigating this problem is topological holography, which relates symmetries in $k$ dimensions to ``bulk'' topological orders in $(k+1)$ dimensions. While conceptually profound, most existing bulk construction methods rely on sophisticated mathematical formalisms and can be difficult to apply to certain symmetry types. In this work, we propose a physically intuitive and versatile method, termed the layered gauging construction, to systematically generate $(k+1)$-dimensional (liquid or fracton) topological orders from $k$-dimensional generalized symmetries. Roughly speaking, the prescription is to stack many layers of $k$-dimensional quantum systems with certain symmetries into a $(k+1)$-dimensional pile, and then sequentially gauge a diagonal symmetry acting on each nearest-neighbor pair of layers. The detailed procedure depends on the specific symmetry types. We have successfully implemented the method in a number of examples in different spatial dimensions, with symmetries that are conventional, higher-form, subsystem, anomalous, nonabelian, or noninvertible. We hence conjecture the method to be very general. For example, from the subsystem symmetry of the $2d$ plaquette Ising model, we derive the X-cube model and also an anisotropic fracton topological order. Additionally, starting from an anomalous $\mathbb Z_2$ symmetry in $1d$, we construct a new square lattice model realizing the double semion topological order. 
\end{abstract}

\maketitle
\tableofcontents

\section{Introduction}
Understanding quantum phases and phase transitions in the presence of symmetries is a central objective of quantum many-body physics. In recent years, the framework of topological holography -- alternatively referred to as symmetry topological field theory or symmetry topological order -- has emerged as a powerful method for this purpose. See Ref.\,\onlinecite{Chen2025Essay} for a recent overview and Refs.\,\onlinecite{MooreSeiberg1989RCFT,Witten1989ChernSimons,Kong2015Center,Kong2018GaplessEdges,Kong201905MathGaplessI,Pulmann201909Sandwich,Ji201912NoninvAnomaly,Kong201912MathGaplessII,Berg202003TOBdry,Kong202003Classification,Kong202005Symmetry,Gaiotto202008Orbifold,Tachikawa202009SL2Z} for early developments. In this framework, associated to each (generalized) symmetry in $k$ dimensions, there is a corresponding $(k+1)$-dimensional bulk topological order. Many topological properties of $k$-dimensional quantum systems with the symmetry, such as the classification of gapped phases and the modular transformation rules of partition functions, can be captured by the algebraic properties of the anyon excitations in the bulk topological order. As a concrete example, by analyzing the anyon condensation patterns of the doubled-Ising topological order, one can show that in one spatial dimension and with the Kramers-Wannier duality as a symmetry, there is only one gapped phase with three-fold degenerate ground states \cite{Sakura20231dGapped}. As another example, in Ref.\,\onlinecite{Cheng2020RelativeAnomaly}, from the topological holographic viewpoint, the authors are able to prove that all unitary minimal conformal field theories in one spatial dimension have no 't Hooft anomaly, thereby imposing a strong constraint on gapless states with anomalous symmetries.  

Several concrete realizations of this holographic duality are already well-established. For instance, any fusion category symmetry in one spatial dimension ($1d$) corresponds to a $2d$ bulk topological order described by the Turaev-Viro topological quantum field theory (TQFT) \cite{TV1992StateSum,BW1996Invariants}, which admits a Hamiltonian realization through the Levin-Wen string-net models \cite{LevinWen2005}. As another example, a $0$-form finite group symmetry $G$ in arbitrary $k$ spatial dimensions, including those with 't Hooft anomalies, is associated with a Dijkgraaf-Witten TQFT in $(k+1)$ spatial dimensions \cite{DijkgraafWitten}. Hamiltonian realizations of these theories have been constructed for arbitrary anomalies at least in $2d$ \cite{KitaevTC,Wu2013TwistedQD}. 
As a common feature of most existing constructions, sophisticated mathematical tools, such as (higher) category theory, that describe the infrared properties of the systems are involved. Related to this feature, these constructions can not be applied to subsystem symmetries which are symmetries acting on rigid lower-dimensional subsystems. The bulk topological orders corresponding to subsystem symmetries are likely to be fracton topological orders, whose mathematical structure is less developed compared to conventional ``liquid'' topological orders. 

Recently, two new bulk construction methods following rather different ideas from the above were proposed \cite{LiuJi2023,GarreRubio2024IterGauging,GarreRubio2025NonabelianIter}. The first method \cite{LiuJi2023} takes as input a pair of $k$-dimensional qubit lattice models related by a generalized Kramers-Wannier duality, and produces a $(k+1)$-dimensional topological order by a specific rule of constructing stabilizers. The second method \cite{GarreRubio2024IterGauging,GarreRubio2025NonabelianIter}, named iterative gauging, starts by gauging a symmetry of a $k$-dimensional lattice model, followed by iteratively gauging the dual symmetry acting on the new gauge field degrees of freedom, until a $(k+1)$-dimensional lattice model is generated. 
These methods focus on the microscopic lattice Hamiltonians, rely less on the mathematical structures, and can both be applied to the case of subsystem symmetries. However, the two constructions also have their own limitations. The first method is only defined for onsite $\mathbb{Z}_2$ symmetries (although they can be higher-form or subsystem symmetries) and its generalization to nonabelian or non-onsite symmetries is not obvious. The second method is apparently incompatible with 't Hooft anomalous symmetries which can not be gauged. 

In this work, we propose a new bulk construction method dubbed \emph{layered gauging}, which may be regarded as a generalization of the two methods above but overcomes their aforementioned limitations. Roughly speaking, our prescription is to stack many layers of $k$-dimensional quantum systems with  certain symmetries into a $(k+1)$-dimensional pile, and then sequentially gauge a diagonal symmetry acting on each nearest neighbor pair of layers. The detailed procedure depends on the specific symmetry types. We have successfully implemented the method in a number of examples in different spacetime dimensions, with symmetries that are conventional, higher-form, subsystem, anomalous, nonabelian, or noninvertible. We hence conjecture the method to be very general, although a rigorous proof of its generality is lacking. For example, from the subsystem symmetry of the $2d$ plaquette Ising model, we constructed \emph{two} distinct fracton topological orders in $3d$ using two different ways of gauging. From the anomalous $\mathbb{Z}_2$ symmetry in $1d$, we constructed a new square lattice model realizing the double semion topological order. 

We hope our bulk construction method provides a new perspective on topological holography or the bulk-boundary correspondence of topological orders, and can produce more interesting lattice models in the future. The method may also become a basis for new quantum state preparation protocols. 

We should mention that our construction has some overlap with the layer construction of $3d$ topological orders in Ref.\,\onlinecite{Barkeshli2022HigherSymm}. In particular, the example in our Section \ref{sec:HigherFormExample} is operationally equivalent to the example in their Section 5.1.1. The difference is that Ref.\,\onlinecite{Barkeshli2022HigherSymm} mainly uses the language of anyon condensation and therefore relies more on the algebraic description of the anyon excitations in each layer. Also, it focuses on $3d$ liquid topological orders and does not touch on fracton models. 

The rest of the paper is organized as follows. In Section \ref{sec:PrescriptionSimplest}, we will introduce our bulk construction prescription and its intuition for the simplest symmetry types, followed by concrete examples in Sections \ref{sec:ElementaryExamples}-\ref{sec:SubsystemExamples}. 
Then in Section \ref{sec:PrescriptionGeneralized}, we generalize our method to more complex types of symmetries, followed by examples in Sections \ref{sec:AnomalyExample}-\ref{sec:NoninvertibleExample}. The examples presented in this paper are largely independent and need not be read in a sequential order. Finally in Section \ref{sec:Discussion}, we comment on possible future directions. 


\section{Prescription and Intuition in the Simplest Situation}\label{sec:PrescriptionSimplest}
\begin{figure}
    \centering
    \includegraphics[scale=\ScalingFactor]{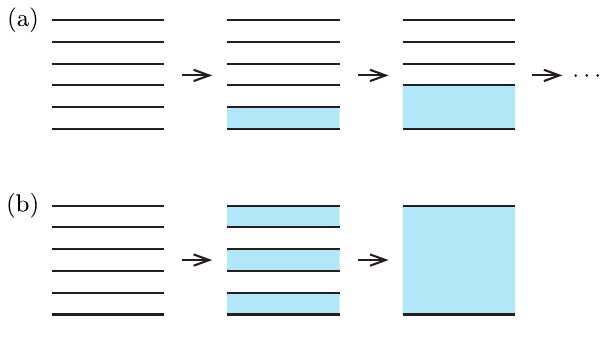}
    \caption{The layered gauging procedure. (a) Sequentially gauge a bilayer symmetry acting on every neighboring pair of layers $(n,n+1)$. From $n=1$ to $n=2$, $n=3$, and so on. This is the main approach taken in this paper. (b) Alternatively, we may take an even-odd gauging pattern: First gauge the bilayer symmetries acting on layer pairs $(2l-1,2l)$, and then gauge those acting on layer pairs $(2l,2l+1)$. }
    \label{fig:Prescription}
\end{figure}
In this section, we shall introduce our layered gauging prescription and its underlying intuition for the simplest symmetry types. Generalizations will be considered in Section \ref{sec:PrescriptionGeneralized}. 

Let $A$ be a symmetry acting on $k$-dimensional space, say some $k$-dimensional lattice for concreteness. As the simplest situation, suppose $A$ is \emph{invertible, abelian, and anomaly-free (onsite)}. For example, $A$ may be a conventional $\mathbb{Z}_2$ symmetry generated by the spin flip on all sites, and it may also be a higher-form or a subsystem symmetry. Denote the symmetry operators by $U_\alpha$ with some index $\alpha=1,2,\cdots$. The first step of our construction is to stack a series of $k$-dimensional quantum systems, all having the symmetry $A$, into a $(k+1)$-dimensional pile. This is illustrated by the left-most picture in Fig.\,\ref{fig:Prescription}a. The Hamiltonians of these layers will be specified later. 
Let us label the different $k$-dimensional layers by $n=1,2,3,\cdots$. The composite $(k+1)$-dimensional system now has an $A$ symmetry acting on every layer $n$, whose symmetry operators will be denoted by $U_{n,\alpha}$. Next, for every nearest-neighbor pair of layers $(n,n+1)$, we gauge the \emph{bilayer $A$ symmetry} generated by $U_{n,\alpha}U_{n+1,\alpha}^{-1}$ for all $\alpha$. This may be done in a sequential order as shown in Fig.\,\ref{fig:Prescription}a: First gauge the bilayer $A$ symmetry generated by $U_{1,\alpha}U_{2,\alpha}^{-1}$ for all $\alpha$, then gauge the symmetry generated by $U_{2,\alpha}U_{3,\alpha}^{-1}$ for all $\alpha$, then $U_{3,\alpha}U_{4,\alpha}^{-1}$, and so on.  

Suppose the $k$-dimensional layers are all boundary free, but we take open boundary condition along the layering direction. In other words, there is a bottommost layer $n=1$ and a topmost layer $n=N$. We claim that the $(k+1)$-dimensional bulk theory resulting from the above layered gauging procedure enforces an $A$ symmetry on its boundary. More precisely, $U_{1,\alpha}U_{N,\alpha}^{-1}=1$ for all $\alpha$. Indeed, after the gauging for the first two layers, the Gauss's law implies $U_{1,\alpha}U_{2,\alpha}^{-1}=1$. This is analogous to electromagnetism where the Gauss's law $\nabla\cdot E=\rho/\epsilon_0$ implies trivial total charge $Q=0$ on any closed space manifold \footnote{However, this property can not be directly generalized to the case of a nonabelian symmetry. }. Similarly, gauging the bilayer symmetry acting on the $n$-th and $(n+1)$-th layers enforces $U_{n,\alpha}U_{n+1,\alpha}^{-1}=1$. These altogether imply $U_{1,\alpha}U_{N,\alpha}^{-1}=1$ as claimed. 

\begin{figure}
    \centering
    \includegraphics[scale=\ScalingFactor]{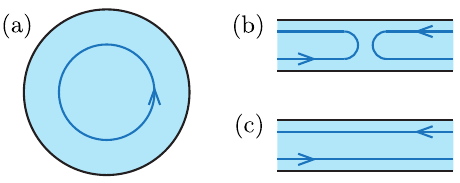}
    \caption{Bulk topological order and boundary symmetry. (a) A $2d$ topological order on a disk region and a circular anyon loop. (b) A cylinder region and a contractible anyon loop. (c) Antiparallel noncontractible anyon loops on a cylinder region. Panels (b) and (c) are equivalent if the anyon line $\mc N$ satisfies $\mc N\times \bar{\mc N}=1$. }
    \label{fig:BoundarySymmetryPicture}
\end{figure}
The property of an enforced boundary $A$ symmetry is exactly what we expect for a topological order characterizing the symmetry $A$ \cite{JiWen2019NoninvAnomaly}, i.e. a topological holographic dual theory. An intuitive picture is as follows. Say we have a $2d$ topological order (lowercase letter $d$ is for spatial dimension) on a disk region as shown in Fig.\,\ref{fig:BoundarySymmetryPicture}a. In the absence of bulk anyon excitations, all dynamical degrees of freedom of the system live on the boundary and hence can be described by an effective $1d$ theory. Now consider circular anyon loop operators as also illustrated in Fig.\,\ref{fig:BoundarySymmetryPicture}a. On one hand, when pushed to the boundary, these loops act as $1d$ symmetry operators and generate a boundary symmetry $A$. On the other hand, these loops are contractible in the bulk and thus evaluate to some positive numbers. This serves as a global constraint on the effective $1d$ theory and by which we say the $1d$ $A$ symmetry has been enforced. Note that in general, some of the anyon loop operators may act trivially in the effective $1d$ description of the boundary theory and thus nontrivial symmetry operators only correspond to a subset of the bulk anyon loops. As a concrete example, when the bulk topological order is the $2d$ $\mathbb{Z}_2$ toric code, the effective $1d$ theory lives on a spin chain subject to the global constraint $\prod_iX_i=1$, which comes from either the $e$ or $m$ anyon loop (but not both). This property of an enforced boundary symmetry can be generalized to a cylinder region which is relevant to our construction if we make an additional assumption: For every anyon line $\mc N$ corresponding to a nontrivial boundary symmetry operator, the fusion of $\mc N$ and its orientation reversed version $\bar {\mc N}$ is trivial, i.e. $\mc N\times \bar {\mc N}=1$ when properly normalized. See panels (b) and (c) of Fig.\,\ref{fig:BoundarySymmetryPicture}. Analogous arguments can also be carried out for higher dimensions. 

The above observation hints the possibility that the layered gauging construction leads to a topological holographic dual theory which we want. However, the $(k+1)$-dimensional bulk theory should ultimately depend on the Hamiltonians of the $k$-dimensional layers. Depending on our initial choice, the bulk may be a nontrivial topological order, but may also be trivially gapped (with a trivial boundary Hilbert space), or even be gapless. From the examples we examined, we conjecture that the good choice is to let the $k$-dimensional layers \emph{spontaneously break their $A$ symmetries completely}. If we regard each $k$-dimensional layer as an infinitesimally thin layer of $(k+1)$-dimensional topological order, the ground state degeneracy from spontaneous symmetry breaking can be regarded as its topological degeneracy. Gauging the bilayer symmetry on the first two layers grows this topological order to a thin slice of nonzero thickness. Gauging the bilayer symmetry on the next two layers further grows the topological order; see the shade in Fig.\,\ref{fig:Prescription}a. Eventually, it expands over a large $(k+1)$-dimensional space. 

It worth mentioning that we may alternatively implement the layered gauging in an even-odd pattern as shown in Fig.\,\ref{fig:Prescription}b. This might be useful for designing practical state preparation protocols, but in this paper, we will focus on the first, sequential way of gauging. 

Our detailed examples with the simplest type of symmetries (invertible, abelian, and anomaly-free) are given in Sections \ref{sec:ElementaryExamples}, \ref{sec:HigherFormExample}, and \ref{sec:SubsystemExamples}, where we respectively consider conventional zero-form symmetries, a higher-form symmetry, and a subsystem symmetry. 


\section{Elementary Examples}\label{sec:ElementaryExamples}
\subsection{2\textit{d} Toric Code from 1\textit{d} $\mathbb{Z}_2$ Ferromagnets}
As the first example, let us try to construct a $2d$ topological order from $1d$ $\mathbb{Z}_2$ symmetric lattice models. Consider a 1$d$ periodic chain of qubits labeled by $j=1,2,\cdots,N_1$, and the Ising Hamiltonian 
\begin{align}
    H_{{\rm Ising},1d}=-J\sum_jZ_jZ_{j+1}-h\sum_jX_j, 
\end{align}
with $J,h\geq 0$. $H_{{\rm Ising},1d}$ has a $\mathbb{Z}_2$ symmetry generated by $U=\prod_jX_j$. Following our general prescription, we stack many copies of the 1$d$ qubit chain into a 2$d$ lattice as in Fig.\,\ref{fig:2dTCLatticeGrowth}a, where we label the different copies by the layer index $n=1,2,\cdots$. Each qubit is now labeled by a pair of indices $(n,j)$. The Hamiltonian has a decoupled form: 
\begin{align}
    H_0=\sum_{n,j}(-JZ_{n,j}Z_{n,j+1}-hX_{n,j}), 
\end{align}
which has a $\mathbb{Z}_2$ symmetry generated by $U_n:=\prod_j X_{n,j}$ on each layer $n$. Our goal is to gauge the subgroup symmetry generated by $U_n U_{n+1}$ (equivalent to $U_nU^{-1}_{n+1}$). 

\begin{figure*}
    \centering
    \includegraphics[scale=\ScalingFactor]{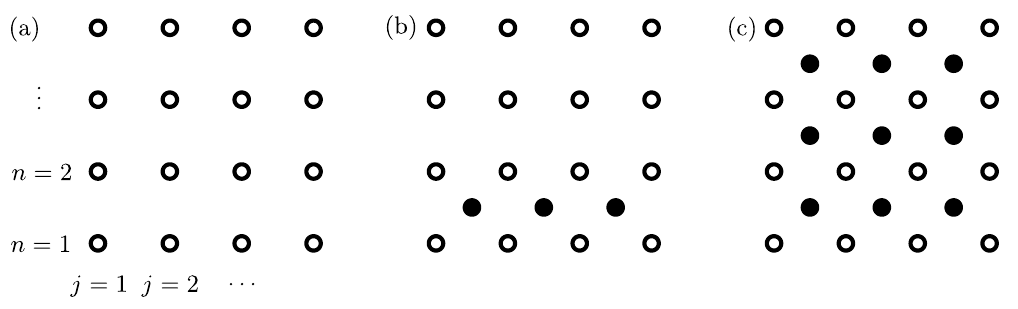}
    \caption{Lattice geometry evolution in the layered gauging construction of a $2d$ topological order. (a) Stacking multiple 1$d$ chains into a 2$d$ lattice. (b) Coupling the first two layers to gauge field degrees of freedom. (c) Final form of the bulk lattice. }
    \label{fig:2dTCLatticeGrowth}
\end{figure*}

Let us quickly review how gauging works using the Ising model $H_{{\rm Ising},1d}$. To gauge the onsite $\mathbb{Z}_2$ symmetry generated by $U$, we introduce new qubits with half-integer indices $j+1/2$, which represent the gauge field degrees of freedom. We then demand local gauge symmetries generated by $S_{j}:=X_{j-1/2}X_{j}X_{j+1/2}$ for all $j$. The $X$ terms in $H_{{\rm Ising},1d}$ already commute with all $S_j$ and hence need not be modified. In contrast, the $ZZ$ terms are not symmetric and we need to replace each $Z_jZ_{j+1}$ by $Z_jZ_{j+1/2}Z_{j+1}$, dubbed the minimal coupling procedure. The Hamiltonian now becomes
\begin{align}
    H_{{\rm Ising},1d}'=-J\sum_j Z_jZ_{j+1/2}Z_{j+1}-h\sum_j X_j. 
\end{align}
As the final step, we restrict the Hilbert space to the gauge invariant subspace, also called the physical subspace, where $S_j=1$ for all $j$. The condition $S_j=1$ is an analog of the Gauss's law in electromagnetism: $X_{j-1/2}X_jX_{j+1/2}\sim \exp[\pi i(E_{j+1/2}-E_{j-1/2}-Q_j)]$ where $E_{j\pm 1/2}\in \{0,1\}$ are the electric field and $Q_j\in \{0,1\}$ is the charge on site $j$. The Gauss's law constraints in particular imply that $U=\prod_j S_j=1$, i.e. the original $\mathbb{Z}_2$ symmetry has been enforced. For completeness, we have given a more general review of gauging in Appendix \ref{app:GaugingReview}, though it is not strictly necessary for reading this paper. 

Let us now get back to our layered gauging construction. As the first step, we would like to gauge the $\mathbb{Z}_2$ symmetry generated by $U_1 U_2$. We treat this symmetry as a 1$d$ symmetry and regard each rung of qubits, $Z_{1,j}$ and $Z_{2,j}$, as a single composite site. We therefore introduce gauge field qubits at $n=3/2$ and $j=1/2,3/2,\cdots, N_1-1/2$, as shown in Fig.\,\ref{fig:2dTCLatticeGrowth}b by the black dots. The local gauge symmetry generators, or Gauss's law stabilizers, take the following form: 
\begin{align}
    \includegraphics[valign=c,scale=\ScalingFactor]{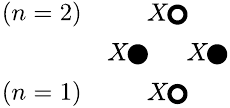}. 
    \label{eq:GaussLaw12TC}
\end{align}
The $ZZ$ operators in the first two layers of $H_0$ are modified by minimal coupling to  
\begin{align}
    \includegraphics[valign=c,scale=\ScalingFactor]{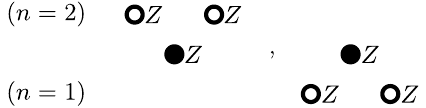}. 
    \label{eq:GaugedIsingTermsTC}
\end{align}
Single $X$ operators in $H_0$ commute with the gauge symmetry generators and hence need not be modified. 

Next, we proceed to gauge the $\mathbb{Z}_2$ symmetry generated by $U_2 U_3$. Note that this is still a valid symmetry after the previous gauging step. In a similar way, we introduce new gauge field sites between layers $2$ and $3$, and the new gauge symmetry generators take the same form as those in \eqref{eq:GaussLaw12TC} except for a shift in the layer indices. Single $X$ operators in the Hamiltonian and the old Gauss's law stabilizers from the previous step need not be modified. $Z$-type Hamiltonian terms (those that are products of $Z$'s) with support on the 2nd and 3rd layers become
\begin{align}
    \includegraphics[valign=c,scale=\ScalingFactor]{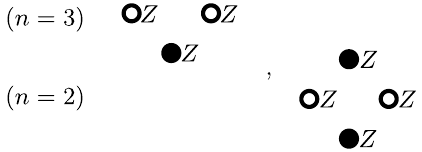}. 
\end{align}
Other $Z$-type terms are unaffected. 

It should be clear how to proceed further. We will eventually get a $2d$ model defined on a lattice that looks like Fig.\,\ref{fig:2dTCLatticeGrowth}c. If we take periodic boundary condition along the layering direction, i.e. $n\sim n+N_2$ for some integer $N_2$, the Hamiltonian we obtain is    
\begin{align}
    \tilde H= -h\left(\includegraphics[scale=\ScalingFactor,valign=c]{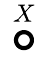}\right)
    - J\left(\includegraphics[scale=\ScalingFactor,valign=c]{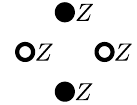}\right),  
\end{align}
where the sums over site indices are implicit. Moreover, the Hilbert space is subject to Gauss's law constraints of the form 
\begin{align}
    \includegraphics[scale=\ScalingFactor,valign=c]{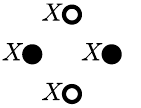}=1. 
\end{align}

The bulk model is particularly interesting when $h=0$ and $J>0$. In this case, it has the toric code topological order \cite{KitaevTC}, also called the $\mathbb{Z}_2$ topological order. To see this explicitly, we can substitute the white (black) dots in our bulk lattice by vertical (horizontal) links, then the four-$Z$ terms in $\tilde H$ become the plaquette stabilizers in the standard toric code, while the four-$X$ Gauss's law stabilizers become the star stabilizers. 

We may soften the Gauss's law constraints by adding the four-$X$ stabilizers to the Hamiltonian with a negative coefficients: 
\begin{align}
    H= &-h\left(\includegraphics[scale=\ScalingFactor,valign=c]{SingleMatterX.pdf}\right)
    - J\left(\includegraphics[scale=\ScalingFactor,valign=c]{ToricCodeZTerm.pdf}\right)\nonumber\\
    &-\lambda\left(\includegraphics[scale=\ScalingFactor,valign=c]{ToricCodeXTerm.pdf}\right). 
\end{align}
The topological phase of the system at $J>0,h=0$ is unaffected by this change as long as $\lambda>0$, since the Hamiltonian is frustration free. Nonetheless, softening the Gauss's law will allow for richer excitations: At $h=0$, the above Hamiltonian $H$ allows for both $e$-anyon and $m$-anyon excitations of the toric code topological order, while the previous $\tilde H$ allows for only one type of anyons, although both Hamiltonians share the same ground states. 

If we take open boundary condition along the layering direction, and suppose $1\leq n\leq N_2$, the product of all Gauss's law stabilizers is equal to $U_1 U_{N_2}$. This suggests that our bulk model indeed enforces a boundary $\mathbb{Z}_2$ symmetry as expected. 

\subsection{3\textit{d} Toric Code from 2\textit{d} $\mathbb{Z}_2$ Ferromagnets}\label{sec:3dTCExample}
In the previous example, we have seen that the $2d$ toric code can be constructed from $1d$ $\mathbb{Z}_2$ ferromagnets using our layered gauging method. As the second example, we will show that the $3d$ toric code can be similarly constructed from $2d$ $\mathbb{Z}_2$ ferromagnets. 

Consider a 2$d$ square lattice of qubits labeled by integer pairs $(x,y)$ and the Ising Hamiltonian 
\begin{align}
    H_{{\rm Ising},2d}= &-J\sum_{x,y}(Z_{x,y}Z_{x+1,y}+Z_{x,y}Z_{x,y+1})\nonumber\\
    &-h\sum_{x,y}X_{x,y}
\end{align}
with $J,h\geq 0$. $H_{{\rm Ising},2d}$ has a $\mathbb{Z}_2$ symmetry generated by $U=\prod_{x,y}X_{x,y}$. Our goal is to stack together many layers of the above 2$d$ model, and then gauge the symmetries generated by $U_n U_{n+1}$, where as before, $U_n$ denotes the action of $U$ on the $n$-th layer.

Again, we start by reviewing the gauging procedure using the single-layer model $H_{{\rm Ising},2d}$ and its $\mathbb{Z}_2$ symmetry generated by $U$. We introduce one gauge field qubit between each pair of nearest-neighbor matter field sites. In other words, the gauge field qubits are in one-to-one correspondence with the links of the square lattice. We then demand local gauge symmetries generated by the following operators: 
\begin{align}
    \includegraphics[scale=\ScalingFactor,valign=c]{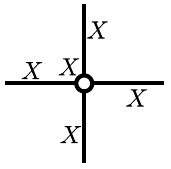}. 
    \label{eq:2dZ2GTGaussLaw}
\end{align}
Note that the product of all such operators is equal to the global symmetry generator $U$. Analogous to the 1$d$ Ising case, the single $X$ term in the Hamiltonian is already gauge invariant and need not be modified. In contrast, the $ZZ$ terms in the Hamiltonian need to couple to the gauge field degrees of freedom in order to enjoy the gauge symmetries. The simplest way of achieving this is to multiply each $ZZ$ by an additional $Z$ on the corresponding link. In addition, as a key difference between 1$d$ and 2$d$ gauge theories, now there exist some additional local operators, acting purely on gauge field sites, which commute with all existing Hamiltonian terms and the Gauss's law operators. These are called the flux operators and take the following form: 
\begin{align}
    \includegraphics[scale=\ScalingFactor,valign=c]{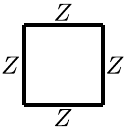}. 
\end{align}
Such an operator can be regarded as the analog of $e^{i\Phi}$ in electromagnetism where $\Phi\in\{0,\pi\}$ is the magnetic flux. As a customary step of the gauging procedure, we also add the flux operators to the Hamiltonian with a negative coefficient, in analogy with the magnetic part of Maxwell term. Putting all those ingredients together, the Hamiltonian after gauging becomes
\begin{align}
    H_{{\rm Ising},2d}'=& - J\left(
    \includegraphics[scale=\ScalingFactor,valign=c]{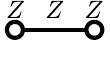}+\includegraphics[scale=\ScalingFactor,valign=c]{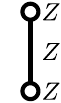}
    \right)-h\left(\includegraphics[scale=\ScalingFactor,valign=c]{SingleMatterX.pdf}\right)\nonumber\\
    &-\gamma\left(\includegraphics[scale=\ScalingFactor,valign=c]{2dZ2GTFluxTerm.pdf}\right). 
    \label{eq:Gauged2dIsing}
\end{align}
Finally, we restrict the Hilbert space to the gauge invariant subspace where the stabilizers in \eqref{eq:2dZ2GTGaussLaw} all equal to $1$. Note that if we did not add the flux operators, the gauge theory would be pathological in the $J\rightarrow 0$ limit with a large ground state degeneracy and extensive massless excitations. 

Getting back to our layered gauging construction, we can now apply  similar gauging procedures to the symmetries generated by $U_{n} U_{n+1}$ sequentially. When the $n$-th and $(n+1)$-th layers are considered, we can treat each rung of qubits labeled by $(n,x,y)$ and $(n+1,x,y)$ as a single composite site, as we did in the previous $2d$ toric code example. 
We will skip the detailed gauging steps and directly present the final result. It will be convenient to think of the bulk lattice as the 3$d$ cubic lattice with qubits living on the \emph{links}. The $z$-direction is the layering direction, vertical links ($z$-links) are the matter field degrees of freedom coming from the original 2$d$ models, and horizontal links ($x$-, $y$-links) are the gauge field degrees of freedom introduced during the layered gauging procedures. Setting $h=0$, corresponding to the symmetry breaking phase of $H_{{\rm Ising},2d}$, and after softening the Gauss's law constraints, the $3d$ Hamiltonian takes the following form:  
\begin{align}
    H=& -J\left( \includegraphics[scale=\ScalingFactor,valign=c]{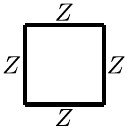}+\includegraphics[scale=\ScalingFactor,valign=c]{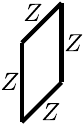} \right)-\gamma\left( \includegraphics[scale=\ScalingFactor,valign=c]{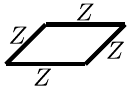} \right)\nonumber\\
    &-\lambda\left( \includegraphics[scale=\ScalingFactor,valign=c]{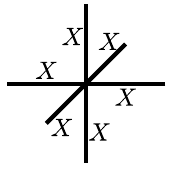} \right).\hspace{5em} \includegraphics[scale=\ScalingFactor,valign=c]{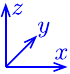} 
\end{align}
This realizes the $3d$ toric code or $\mathbb{Z}_2$ topological order. Note that the plaquette terms (four-$Z$ terms) parallel to the $xz$- and $yz$-planes come from the Ising interactions of $H_{{\rm Ising},2d}$, while those parallel to the $xy$-plane are the flux terms from the gauging procedure. The star terms (six-$X$ terms) come from the Gauss's law constraints. 

In a similar way, toric code with star and plaquette stabilizers in arbitrary $(k+1)$ dimensions can be generated from $k$-dimensional $\mathbb{Z}_2$ ferromagnets. 

\section{An Example with Higher-Form Symmetry}\label{sec:HigherFormExample}
The $3d$ toric code topological order can also be constructed by layered gauging a one-form $\mathbb{Z}_2$ symmetry, as we will explain in this section. 

Consider a 2$d$ square lattice with qubits living on the \emph{links} and the following Hamiltonian: 
\begin{align}
    H_{Z_2{\rm GT}}=& - J\left(
    \includegraphics[scale=\ScalingFactor,valign=c]{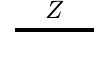}+\includegraphics[scale=\ScalingFactor,valign=c]{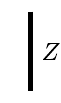}
    \right)\nonumber\\
    &-h\left(\includegraphics[scale=\ScalingFactor,valign=c]{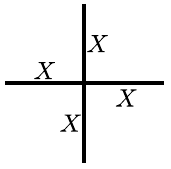}\right)-\gamma\left(\includegraphics[scale=\ScalingFactor,valign=c]{2dZ2GTFluxTerm.pdf}\right). 
\end{align}
This model is actually \emph{equivalent} to the gauged Ising model in \eqref{eq:Gauged2dIsing}: The Gauss's law of the gauged Ising model indicates that the site (matter field) degrees of freedom are completely redundant, as the value of $X$ on every site can be inferred from the values of $X$ on the surrounding links. Hence, we can use the Gauss's law constraints to remove all site qubits \footnote{An explicit way of achieving this is to apply ${\rm CNOT}$ gates to every pair $(v,e)$, where $v$ is a site, $e$ is a link adjacent to $v$, and the site qubit is the controlling qubit. As a result, the Gauss's law becomes $X_v=1$ for every site $v$. Therefore, site qubits are all frozen and can be effectively removed. }, and the resulting Hamiltonian is exactly $H_{Z_2{\rm GT}}$ above. 

The Hamiltonian $H_{Z_2{\rm GT}}$ has a \emph{one-form} symmetry whose generators act on loops or more generally codimension-one submanifolds. Explicitly, given any loop $l$ formed by a closed chain of links $e\in l$, $U_l:=\prod_{e\in l}Z_e$ commutes with the Hamiltonian. We would like to apply our layered gauging method to this one-form symmetry: Stack many copies of the $2d$ gauge theory $H_{Z_2{\rm GT}}$ and then gauge the symmetries generated by $U_{n,l}U_{n+1,l}$ for all $n$ and $l$. Here, as before, the layer index $n$ labels the different copies of the 2$d$ theory and $U_{n,l}$ is the action of $U_l$ on the $n$-th copy. 

Again, we will first practice gauging using the single-layer model $H_{Z_2{\rm GT}}$, and the bilayer cases are not much different. As $U_l$ acts on a loop, it looks very much like a 1$d$ symmetry generator. Comparing with our previous 1$d$ gauging example, it is natural to guess that we should introduce gauge field qubits on the \emph{sites} of the square lattice and declare gauge symmetries generated by the following operators. 
\begin{align}
    \includegraphics[scale=\ScalingFactor,valign=c]{2dZ2GTxHopping.pdf},\quad
    \includegraphics[scale=\ScalingFactor,valign=c]{2dZ2GTyHopping.pdf}. 
    \label{eq:1FormGaugingGaussLaw}
\end{align}
The four-$X$ terms in the Hamiltonian can be minimally modified to the following form in order to commute with the above generators: 
\begin{align}
    \includegraphics[scale=\ScalingFactor,valign=c]{2dZ2GTGaussLaw.pdf}. 
\end{align}
On the other hand, $Z$-type operators in the Hamiltonian need not be modified. We therefore obtain the Hamiltonian
\begin{align}
    H_{Z_2{\rm GT}}'=& - J\left(
    \includegraphics[scale=\ScalingFactor,valign=c]{2dZ2GTxHoppingMatterInt.pdf}+\includegraphics[scale=\ScalingFactor,valign=c]{2dZ2GTyHoppingMatterInt.pdf}
    \right)\nonumber\\
    &-h\left(\includegraphics[scale=\ScalingFactor,valign=c]{2dZ2GTGaussLaw.pdf}\right)-\gamma\left(\includegraphics[scale=\ScalingFactor,valign=c]{2dZ2GTFluxTerm.pdf}\right),  
\end{align}
together with the Hilbert space constraint that operators in \eqref{eq:1FormGaugingGaussLaw} all equal to $1$. This new theory, resulting from the gauging of a one-form symmetry, is actually equivalent to the $2d$ Ising model $H_{{\rm Ising}, 2d}$ presented in Section \ref{sec:3dTCExample}. This is because all \emph{link} degrees of freedom in this theory are redundant and can be removed by the Gauss's law \footnote{This will be more evident if we first apply CNOT gates to every site-link pair $(v,e)$. }. After this transformation, $H_{Z_2{\rm GT}}$ becomes the same as $H_{{\rm Ising}, 2d}$ up to an additive constant coming from the $\gamma$ terms. The one-form symmetry in $H_{Z_2{\rm GT}}$ is sometimes called the dual symmetry of the conventional $\mathbb{Z}_2$ symmetry in $H_{{\rm Ising}, 2d}$. 

Getting back to our layered gauging construction, bilayer one-form symmetries generated by $U_{n,l}U_{n+1,l}$ can be gauged in a similar way. 
Let the $H_{Z_2{\rm GT}}$ layers be parallel to the $xy$-plane and let the new gauge field qubits be represented by $z$-links. After setting $J=0$, that is when the one-form symmetry in each layer is spontaneously broken, and after softening the Gauss's law constraints, the $3d$ bulk Hamiltonian is  
\begin{align}
    H=& -\lambda\left( \includegraphics[scale=\ScalingFactor,valign=c]{3dTCZTermxz.pdf}+\includegraphics[scale=\ScalingFactor,valign=c]{3dTCZTermyz.pdf} \right)-\gamma\left( \includegraphics[scale=\ScalingFactor,valign=c]{3dTCZTermxy.pdf} \right)\nonumber\\
    &-h\left( \includegraphics[scale=\ScalingFactor,valign=c]{3dTCXTerm.pdf} \right).\hspace{5em} \includegraphics[scale=\ScalingFactor,valign=c]{3DFrame.pdf} 
\end{align}
This again realizes the $3d$ toric code topological order. 

\section{Examples with Subsystem Symmetry}\label{sec:SubsystemExamples}
\subsection{The 2\textit{d} Plaquette Ising Symmetry}
The $2d$ Plaquette Ising model is one of the simplest lattice models with subsystem symmetries. Consider a 2$d$ square lattice of qubits labeled by integer pairs $(x,y)$, the plaquette Ising Hamiltonian reads 
\begin{align}
    H_\text{plaq-Ising}=& -J\sum_{x,y}Z_{x,y}Z_{x+1,y}Z_{x,y+1}Z_{x+1,y+1}\nonumber\\
    & -h\sum_{x,y}X_{x,y}. 
\end{align}
This model has $\mathbb{Z}_2$ symmetries acting on 1$d$ subsystems: The symmetry generators are $U_x:=\prod_y X_{x,y}$ and $U_y:=\prod_x X_{x,y}$ supported on any column or row of the square lattice. Note the difference between subsystem symmetry considered here and higher-form symmetry considered in Section \ref{sec:HigherFormExample}: Subsystem symmetry generators act on \emph{rigid subsystems}, such as straight lines in the present case, while higher-form symmetry generators can act on arbitrary submanifolds of some fixed codimension, such as all 1$d$ lines, either straight or curvy. 

Next, we would like to stack many copies of the $2d$ plaquette Ising model labeled by the layer index $n$, and then gauge the $\mathbb{Z}_2$ symmetries generated by $U_{n,x}U_{n+1,x}$ and $U_{n,y}U_{n+1,y}$. As it turns out, there are at least two different ways of gauging such subsystem symmetries and they lead to inequivalent $3d$ fracton topological orders. In what follows, we will first explain the two ways of gauging using the single-layer model $H_\text{plaq-Ising}$ for its simplicity, and then present the bulk construction results from these two different routes. 

\subsection{Two Ways of Gauging}
One obvious way of gauging the subsystem symmetry in $H_\text{plaq-Ising}$ is to sequentially gauge the $\mathbb{Z}_2$ symmetries on each row and column, treating them as 1$d$ symmetries. The ordering of the gauging sequence does not matter and the resulting Hamiltonian is as follows. 
\begin{align}
    H_\text{plaq-Ising}^{(1)}=
    -J\left(\includegraphics[scale=\ScalingFactor,valign=c]{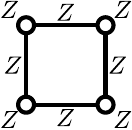}\right)-h\left(\includegraphics[scale=\ScalingFactor,valign=c]{SingleMatterX.pdf}\right), 
\end{align}
where qubits on the links are gauge field degrees of freedom. In addition, the Hilbert space is subject to the following Gauss's law: 
\begin{align}
    \includegraphics[scale=\ScalingFactor,valign=c]{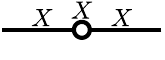},\quad \includegraphics[scale=\ScalingFactor,valign=c]{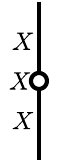}\quad=1. 
\end{align}
We would like to note a possible simplification of the above model, although this will not be used for our later bulk construction: We may remove the site degrees of freedom using half of the Gauss's law constraints, analogous to what we did in Section \ref{sec:HigherFormExample}. The resulting equivalent model has the following Hamiltonian 
\begin{align}
    H_\text{dual-PI}^{(1)}= -J
    \left( \includegraphics[scale=\ScalingFactor,valign=c]{2dZ2GTFluxTerm.pdf} \right)-h\left( \includegraphics[scale=\ScalingFactor,valign=c]{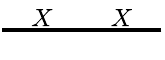} \right), 
\end{align}
and is subject to the constraints
\begin{align}
    \includegraphics[scale=\ScalingFactor,valign=c]{2dZ2GTGaussLawMatterInt.pdf}=1. 
\end{align}
This simplified model was first discovered in Ref.\,\onlinecite{LiuJi2023} as an unconventional Kramers-Wannier dual to the plaquette Ising model. It is not hard to check that the Hamiltonian terms of $H_\text{plaq-Ising}$ and $H_\text{dual-PI}^{(1)}$ satisfy the same commutation or anticommutation relations. 

There exists a different way of gauging the subsystem symmetry in $H_\text{plaq-Ising}$. In this second approach, we introduce gauge field qubits at \emph{plaquette centers} and 
declare the following gauge symmetry generators: 
\begin{align}
    \includegraphics[scale=\ScalingFactor,valign=c]{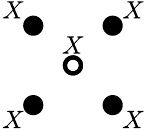}, 
\end{align}
where the black dots are gauge field qubits. Note that the subsystem symmetry generators $U_x$ and $U_y$ can all be decomposed as the products of such local operators. Therefore, the subsystem symmetry will be enforced once we demand the above operators all equal to $+1$ (Gauss's law). To commute with those gauge symmetry generators, each four-$Z$ term in $H_\text{plaq-Ising}$ should be multiplied by another $Z$ at the corresponding plaquette center. The gauged Hamiltonian is then
\begin{align}
    H_\text{plaq-Ising}^{(2)}=
    -J\left(\includegraphics[scale=\ScalingFactor,valign=c]{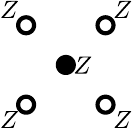}\right)-h\left(\includegraphics[scale=\ScalingFactor,valign=c]{SingleMatterX.pdf}\right).  
\end{align}
We may use the Gauss's law constraints to remove all matter field degrees of freedom, leading to the simplified model given below. 
\begin{align}
    H_\text{dual-PI}^{(2)}=
    -J\left(\includegraphics[scale=\ScalingFactor,valign=c]{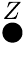}\right)-h\left(\includegraphics[scale=\ScalingFactor,valign=c]{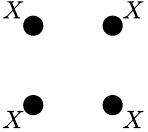}\right).  
\end{align}
This is nothing but the $2d$ plaquette Ising model with the roles of $X$ and $Z$ exchanged. $H_\text{dual-PI}^{(2)}$ is the more conventional Kramers-Wannier dual to $H_\text{plaq-Ising}$. This second way of gauging may be understood as follows: We regard each plaquette as a ``generalized link'' connected to four sites, and then each four-$Z$ term in $H_\text{plaq-Ising}$ can be regarded as a generalized $1d$ Ising term. The model can then be gauged in a similar way as the $1d$ Ising model. This perspective is more natural from the theory of classical error correcting codes \cite{Vedika2023LDPC1}, where we regard each four-$Z$ term as a classical check. 

It might be interesting to note that $H_\text{dual-PI}^{(2)}$ can be mapped to $H_\text{dual-PI}^{(1)}$ by a standard Kramers-Wannier duality, which is equivalent to gauging the $\mathbb{Z}_2$ symmetry generated by the product of all $Z$, removing matter sites by the Gauss's law, and finally setting the flux term coefficient to $+\infty$. 

\subsection{X-Cube Fracton Topological Order}
Both ways of gauging the subsystem $\mathbb{Z}_2$ symmetries can be generalized to the bilayer case and hence can be adopted for our layered gauging construction. Interestingly, these two approaches lead to distinct bulk topological orders, as we will see in this and the next subsection. We note that topological holography in the case of subsystem symmetry is still an underexplored subject. Nonetheless, previous works have found that in this context, the same boundary theory may correspond to distinct bulk fracton topological orders \cite{LiuJi2023,Nat2023FractonBdry}.  

According to previous experience, we set $h=0$ and $J>0$ so that the subsystem symmetry of $H_\text{plaq-Ising}$ is spontaneously broken. 
If we take the first way of gauging, that is to gauge the 1$d$ symmetry generators $U_{n,x}U_{n+1,x}$ and $U_{n,y}U_{n+1,y}$ sequentially, the resulting bulk Hamiltonian after softening the Gauss's law is as follows. 
\begin{align}
    H^{(1)}=&-\lambda\left(\includegraphics[scale=\ScalingFactor,valign=c]{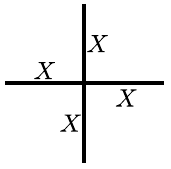}+\includegraphics[scale=\ScalingFactor,valign=c]{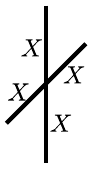}\right)\nonumber\\
    &-J\left(\includegraphics[scale=\ScalingFactor,valign=c]{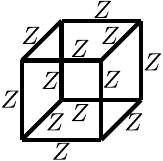}\right).\hspace{4em} \includegraphics[scale=\ScalingFactor,valign=c]{3DFrame.pdf}
    \label{eq:XCubeModel}
\end{align}
As the pictures suggest, the bulk model qubits live on the links of a 3$d$ cubic lattice. $z$-links come from the 2$d$ plaquette Ising layers, while $x$- and $y$-links support gauge field degrees of freedom introduced by the layered gauging. The $3d$ model realizes the renowned X-cube fracton topological order \cite{XCubeModel}. The standard X-cube model has another type of four-body stabilizers that are parallel to the $xy$-plane. However, those stabilizers are not independent from the existing ones in the Hamiltonian above. Therefore, missing those terms will not affect the topological order of the ground state. 

The X-cube fracton topological order has a robust and system size dependent ground state degeneracy. Suppose the cubic lattice is periodic with $L_x\times L_y\times L_z$ number of vertices, this degeneracy $D$ is given by $\log_2 D=2(L_x+L_y+L_z)-3$. In contrast, the ground state degeneracy of a conventional or ``liquid'' topological order, such as the $3d$ toric code, only depends on the topology of the space manifold. As a related property, anyon excitations in the X-cube fracton topological order has restricted mobility. 

\subsection{An Anisotropic Fracton Model}
Alternatively, we may take the second way of gauging, that is to introduce gauge field qubits at centers of plaquettes (or plaquette pairs in the bilayer case). The resulting 3$d$ bulk lattice may be decomposed into two sublattices: Qubits in the first sublattice are located at integer coordinates $(x,y,z)\in \mathbb{Z}^3$ and are represented by white dots. Qubits in the second sublattice are located at half integer coordinates $(x+1/2,y+1/2,z+1/2)$ and are represented by black dots. 
After setting $h=0$ and softening the Gauss's law, the bulk Hamiltonian is 
\begin{align}
    H^{(2)}= -J\left(\includegraphics[scale=\ScalingFactor,valign=c]{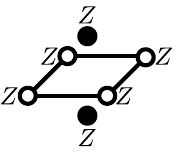}\right)-\lambda\left(\includegraphics[scale=\ScalingFactor,valign=c]{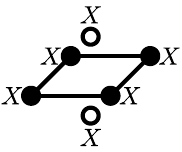}\right) 
\end{align}
where the 3$d$ coordinate frame is the same as that in \eqref{eq:XCubeModel}. This model was first described in Ref.\,\onlinecite{Chen2019Gauging} and was subsequently also constructed by a few other methods \cite{Fuji2019CoupledLayer,LiuJi2023,Zhai2025Product}. It realizes an anisotropic fracton topological order: The ground state degeneracy depends on the system sizes along the $x$ and $y$ directions but not on that along the $z$ direction. Moreover, the anyon excitations have restricted mobility along the $x$ and $y$ directions but are free to move along the $z$ direction. Therefore, $H^{(2)}$ and $H^{(1)}$ are topologically inequivalent fracton models. 

\section{Generalized Prescriptions}\label{sec:PrescriptionGeneralized}
With suitable modifications, our layered gauging construction can also be applied to more general types of symmetries. 

In Section \ref{sec:AnomalyExample}, we consider a symmetry $A$ that is invertible, abelian, but has \emph{'t Hooft anomaly} (obstruction to gauging). Although the $A$ symmetry acting on each single layer can not be gauged due to the anomaly, the bilayer symmetry generated by $U_{n,\alpha}U_{n+1,\alpha}^{-1}$ is anomaly free and can still be gauged. 
The key distinction from the case of onsite symmetries is as follows: After gauging the bilayer symmetry on the first two layers, $U_{2,\alpha}$ do not commute with the Gauss's law operators (gauge symmetry generators) and are no longer valid symmetry operators. This issue can be resolved by coupling the second-layer symmetry operators to the gauge field in a way that respects the Gauss's law. Let the modified symmetry operators from this procedure be $U_{2,\alpha}'$. We can now proceed to gauge $U_{2,\alpha}'U^{-1}_{3,\alpha}$, then modify $U_{3,\alpha}$ to $U_{3,\alpha}'$, gauge $U_{3,\alpha}'U_{4,\alpha}^{-1}$, and so on. 

In Section \ref{sec:NonabelianExample}, we consider a symmetry $G$ that is invertible, anomaly-free, but \emph{nonabelian}. 
If we still denote the symmetry operators by $U_{n,\alpha}$, an immediate issue is that $U_{n,\alpha}U^{-1}_{n+1,\alpha}$ do not generate the same symmetry group as $G$. This is not the biggest issue as we may instead try to gauge $U_{n,\alpha}U_{n+1,\alpha}$ \footnote{Indeed, in the case of an abelian invertible symmetry, under group redefinitions (automorphisms) on half of the layers, we may also state our construction as gauging the bilayer symmetries generated by $U_{n,\alpha}U_{n+1,\alpha}$. }. The real problem is that: Since the symmetry group $G$ itself is nonabelian, there is no way to expect $U_{2,\alpha}$ to respect the Gauss's law after gauging the bilayer symmetry on the first two layers. Coupling $U_{2,\alpha}$ to the gauge fields does not seem to help. The way out we find is as follows. We require each layer to have both a ``left'' symmetry $G_L$ and a ``right'' symmetry $G_R$ that mutually commute, although the representation of $G_L\times G_R$ need not be faithful. Let the action of $G_L$ ($G_R$) on the $n$-th layer be $\rho_{n,L}(g)$ ($\rho_{n,R}(g)$) with $g\in G$. Our generalized construction is to sequentially gauge the bilayer $G$ symmetry generated by $\rho_{n,L}(g)\rho_{n+1,R}(g)$. Apparently, $G_L$ and $G_R$ on each layer need to be somehow ``glued'' together, as the construction will certainly not work if each layer decomposes into two standalone subsystems supporting two independent $G$ symmetries. We achieve this by requiring that each layer initially fully break both $G_L$ and $G_R$, but preserve a $G$ subgroup of $G_L\times G_R$. In the special case where $G$ is abelian, we can choose $\rho_{n,L}(g)=\rho_{n,R}(g)^{-1}=U_{n,g}$ such that the diagonal subgroup of $G_L\times G_R$ acts trivially and is unbroken. The generalized construction then reduces to its simplest version. 

Finally in Section \ref{sec:NoninvertibleExample}, we consider an example with a \emph{noninvertible symmetry} (fusion category symmetry) $\mc C$ that is commutative and free of 't Hooft anomaly (but not onsite). Denote the symmetry generators by $\mc N_\mu$ with some index $\mu$. Associated to each $\mc N_\mu$, there is a dual symmetry generator $\bar{\mc N}_\mu$ satisfying $\mc N_\mu\times \bar{\mc N}_\mu=1+\cdots$. A natural generalization of our previous construction is to sequentially gauge the bilayer symmetries generated by $\mc N_{n,\mu}\bar{\mc N}_{n+1,\mu}$. However, these operators are not closed under multiplication and do not generate a $\mc C$ symmetry, which is because the product of two symmetry generators in a noninvertible symmetry is generically a sum of several distinct symmetry generators. Nonetheless, based on the matrix product operator structure of $\mc N_\mu$, we have found a natural gauging-like procedure to promote the bilayer global symmetry operators $\mc N_{n,\mu}\bar{\mc N}_{n+1,\mu}$ to local ones. Moreover, the resulting local ``gauge'' symmetry operators do generate a $\mc C$ symmetry. Our bulk construction works out nicely with this ``generalized gauging'' at least in the example we studied. 


\section{An Example with Anomaly}\label{sec:AnomalyExample}
\subsection{Anomalous $\mathbb{Z}_2$ Symmetry in 1\textit{d}}
Consider a 1$d$ periodic chain of qubits labeled by $j=1,2,\cdots,N_1$. The following unitary generates an anomalous $\mathbb{Z}_2$ symmetry \cite{LevinGu}: 
\begin{align}
    U:=-\prod_j X_j \prod_k L_{k,k+1},
    \label{eq:1dAnomalousZ2Def}
\end{align}
where
\begin{align}
    L_{j,k}:=i^{(1-Z_jZ_k)/2}. 
    \label{eq:LjkDef}
\end{align}
The effect of $L_{k,k+1}$ is to assign a factor of $i$ to each domain wall. The overall $-1$ factor in $U$ is unimportant and is just to match with the convention in Ref.\,\onlinecite{LevinGu}. This anomalous $\mathbb{Z}_2$ symmetry was first derived in Ref.\,\onlinecite{LevinGu} as the boundary symmetry of a symmetry protected topological state. One may also directly compute its anomaly data using, e.g. the Else-Nayak approach \cite{ElseNayak}. As a consequence of the nontrivial anomaly, any local Hamiltonian with this symmetry can not have a trivially gapped ground state: Either there is spontaneous symmetry breaking (SSB) or the spectrum is gapless. 

The symmetry generator $U$ can be rewritten in a different form that will be useful for our later calculations. We use the notation $\Lambda_j(M)$ to denote controlled unitary gates: 
\begin{align}
    \Lambda_j(M):=\left(\frac{1+Z_j}{2}\right)+M\left(\frac{1-Z_j}{2}\right), 
\end{align}
where $M$ is a unitary operator acting on some sites other than the controlling site $j$. The single-qubit phase gate $S_j:=\Lambda_j(i)$ (equal to ${\rm diag}(1,i)$ in the $Z$-basis) and the two-qubit controlled-$Z$ gate $\CZ_{k,j}=\CZ_{j,k}:=\Lambda_j(Z_k)$ are frequently used examples in the context of quantum circuits. One can verify that 
\begin{align}
    L_{j,k}=S_jS_k\CZ_{j,k}.  
\end{align}
It follows that 
\begin{align}
    U=-(-i)^{N_1}\prod_j Y_j\prod_k\CZ_{k,k+1}. 
\end{align}

\subsection{Bulk Construction}
\begin{figure}
    \centering
    \includegraphics[scale=\ScalingFactor]{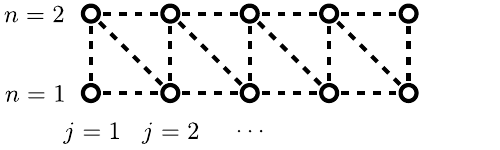}
    \caption{Triangular plaquettes for the purpose of layered gauging $1d$ anomalous $\mathbb{Z}_2$ symmetry. }
    \label{fig:TriangularPlaquettes}
\end{figure}
We would like to construct a $2d$ topological order that enforces this anomalous symmetry on its boundary. To this end, we stack many copies of the 1$d$ chain to form a 2$d$ lattice as in Fig.\,\ref{fig:2dTCLatticeGrowth}a, with the different copies labeled by the layer index $n=1,2,\cdots$. 
Sites or qubits in this lattice are then labeled by a pair of integers $(n,j)$. We choose the initial decoupled Hamiltonian to be 
\begin{align}
    H_0=-\sum_{n,j}Z_{n,j}Z_{n,j+1}. 
\end{align}
For each layer $n$, we can define a 1$d$ anomalous $\mathbb{Z}_2$ generator $U_n$ as in Eq.\,\ref{eq:1dAnomalousZ2Def}. Under $H_0$ and these symmetries, all layers are in the SSB phase. Our goal now is to sequentially gauge the $\mathbb{Z}_2$ subsystem symmetries generated by $U_n U_{n+1}$, starting from $n=1$. Importantly, the $\mathbb{Z}_2$ symmetry generated by $U_n U_{n+1}$ is anomaly free and hence the gauging is possible. 
We will see that the resulting theory realizes the double semion topological order. 

\textbf{Step 1A.} We start from the first two layers. Given that $U_1 U_2$ generates an anomaly free symmetry, there should exist a finite-depth local unitary that maps $U_1 U_2$ to an onsite operator \footnote{It was realized very recently that there exist lattice symmetries which are free of field-theoretic 't Hooft anomaly but are still not onsiteable \cite{Levin2025LatticeAnomaly,Else2025LatticeAnomaly,Thorngren2025LatticeAnomaly}. We do not have such problem here. }. The following unitary does the job: 
\begin{align}
    M_{1,2}:=\prod_{\Delta}^\text{(layers 1 \& 2)}{\rm CCZ}_\Delta=M_{1,2}^\dagger. 
\end{align}
Here, the product runs over all triangular plaquettes illustrated in Fig.\,\ref{fig:TriangularPlaquettes} -- both lower-left and upper-right triangles. ${\rm CCZ}_\Delta$ acts on the three vertices of the triangle $\Delta$ and is defined by ${\rm CCZ}_{i,j,k}=\Lambda_i(\CZ_{j,k})$. Note that ${\rm CCZ}_{i,j,k}$ is invariant under the permutation of site indices and hence there is no need to specify a site ordering. One can verify 
\begin{align}
    M_{1,2}(U_1 U_2)M_{1,2}=\prod_{n=1,2}\prod_jX_{n,j}. 
\end{align}
The $ZZ$ terms in the Hamiltonian are invariant under this unitary transformation. 

\textbf{Step 1B.} We can now gauge the onsite symmetry generated by $M_{1,2}(U_1 U_2)M_{1,2}$, treating it as a 1$d$ symmetry. We introduce gauge field sites at $n=3/2$ and $j=1/2,3/2,\cdots,N-1/2$, as in Fig.\,\ref{fig:2dTCLatticeGrowth}b. 
Eigenvalues of the following Gauss's law stabilizers will be enforced to be $+1$.  
\begin{align}
    \includegraphics[valign=c,scale=\ScalingFactor]{GaussLaw12.pdf}
    \label{eq:GaussLaw12}
\end{align}
$Z_{n,j}Z_{n,j+1}$ operators on the first two layers (which enter $H_0$ with a minus sign) are modified to the following form:  
\begin{align}
    \includegraphics[valign=c,scale=\ScalingFactor]{GaugedIsingTerms.pdf}. 
    \label{eq:GaugedIsingTerms}
\end{align}
An important difference between the present example and the previous toric code case is that the boundary symmetry operator $U_2$ also gets modified by the gauging. To see how this happens, we first compute
\begin{align}
    &M_{1,2}U_2M_{1,2}=\nonumber\\
    &(-1)
    \includegraphics[valign=c,scale=\ScalingFactor]{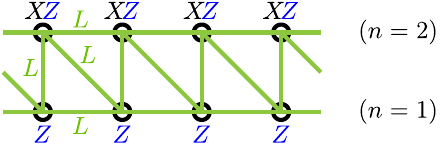}.
\end{align}
Here, each green bond represents an $L$ operator as defined in \eqref{eq:LjkDef} acting on the two sites it connects. As a remark about operator ordering, here and throughout, when we draw an operator that is a product of Pauli-$X$ and $Z$-basis diagonal operators, the Pauli-$X$ operators are always on the left while the diagonal operators are always on the right. The operator in the equation above does not commute with the Gauss's law stabilizers from gauging $U_1 U_2$, and hence it is no longer a valid symmetry after the gauging. A natural solution to this issue is to also couple $M_{1,2}U_2M_{1,2}$ to the gauge field. This is not hard: All we need is to apply the gauging map to the $L$ operators. Each $L$ operator contains a $ZZ$ operator exponentiated. Hence, the gauging transformations of $L$ operators follow from those of the $ZZ$ operators, such as $Z_{2,j}Z_{1,j+1}\mapsto Z_{2,j}Z_{3/2,j+1/2}Z_{1,j+1}$ and those shown in \eqref{eq:GaugedIsingTerms}. $Z_{2,j}Z_{1,j}$ need not be modified as it is already gauge invariant. We will denote by $V_2$ the result of the above procedure applied to $M_{1,2}U_2M_{1,2}$; let us not explicitly write it down. $V_2$ is a valid $\mathbb{Z}_2$ symmetry operator of our theory and our goal next is to gauge the symmetry generated by $V_2 U_3$. 

\textbf{Step 1C.} It will be helpful to simplify $V_2$ a bit. A natural attempt is to apply the unitary $M_{1,2}$ again in order to cancel some of its effect on $U_2$ earlier. The result is 
\begin{align}
    &U_2':=M_{1,2}V_2M_{1,2}\nonumber\\
    &=(-1)\includegraphics[valign=c,scale=\ScalingFactor]{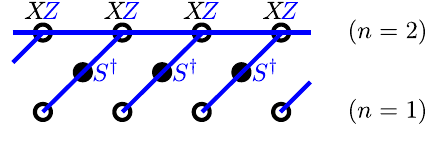}. 
    \label{eq:U2primed}
\end{align}
Here and throughout, a blue bond without any labeling represents a $\CZ$ operator acting on the two sites it connects. The following relation was used in the derivation of the above equation. 
\begin{align}
    i^{(1-Z_1Z_2Z_3)/2}=S_1S_2S_3\CZ_{1,2}\CZ_{2,3}\CZ_{3,1}. 
    \label{eq:tildeLrelation}
\end{align}
Hamiltonian terms such as those in \eqref{eq:GaugedIsingTerms} are unaffected by this unitary transformation. However, the Gauss's law stabilizers are modified to the following form.  
\begin{align}
    \includegraphics[valign=c,scale=\ScalingFactor]{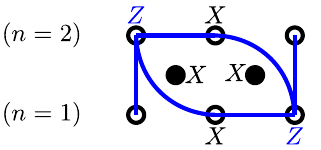}
    \label{eq:GaussLaw12Transformed}
\end{align}
Again, blue bonds represent $\CZ$ operators. 

\textbf{Step 2A.} Let us now move upward to layers $2$ and $3$. We would like to gauge $U_2' U_3$ which looks like 
\begin{align}
    \includegraphics[valign=c,scale=\ScalingFactor]{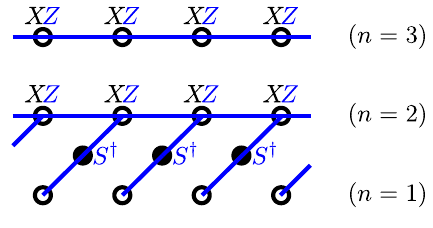}. 
\end{align}
Compared to $U_2U_3$, there are some extra ``legs'' in between layers $1$ and $2$, and hence some extra efforts are needed to transform this symmetry generator to an onsite form. We define a unitary operator $\Xi_{2,3}$ which is the product of the ingredients listed below.  
\begin{itemize}
    \item $M_{2,3}$, defined analogously to $M_{1,2}$ but shifted upward by one layer. 
    \item ${\rm CCZ}$ gates acting on the following triples of sites: $(n=1,j)$, $(n=3/2,j+1/2)$, and $(n=2,j+1)$, for all $j$. 
    \item Gates ${\rm CS}_{ab}:=\Lambda_a(S_b)$ on the following pairs of sites: $[a,b]=[(1,j),(3/2,j+1/2)]$ and $[(2,j+1),(3/2,j+1/2)]$ for all $j$. 
\end{itemize}
With this definition, one can verify that 
\begin{align}
    \Xi_{2,3}(U_2' U_3)\Xi_{2,3}^\dagger=\prod_{n=2,3}\prod_j X_{n,j}. 
\end{align}
No Hamiltonian term is affected by the unitary transformation $\Xi_{2,3}$. In contrast, the previous Gauss's law stabilizers are modified to the following form. 
\begin{align}
    (-1)\includegraphics[valign=c,scale=\ScalingFactor]{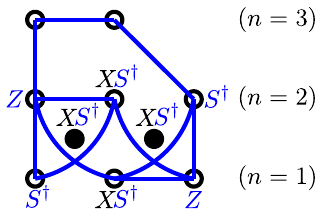}
    \label{eq:GaussLaw12_2ndTransf}
\end{align}

\textbf{Step 2B.} We can now gauge the onsite symmetry generated by $\Xi_{2,3}(U_2' U_3)\Xi_{2,3}^\dagger$ as in Step 2A. We get new Gauss's law stabilizers similar to those in \eqref{eq:GaussLaw12} but shifted upward by one layer. Hamiltonian terms with support on the 1st layer are unaffected, but those with support on layers $2$ and $3$ are mapped to
\begin{align}
    \includegraphics[valign=c,scale=\ScalingFactor]{GaugedIsingTerms23.pdf}. 
\end{align}
The hardest calculation of this step is to apply the gauging map to the old Gauss's law stabilizers in \eqref{eq:GaussLaw12_2ndTransf}. In general, given a symmetric local operator $O$, the gauged operator $O'$ is the unique local operator satisfying two properties: 
\begin{itemize}
    \item $O'$ commutes with new Gauss's law operators. 
    \item $O'$ reduces to $O$ if the gauge potentials on the gauge field sites are trivial, namely $Z_{5/2,j}=1$ for all $j$ in our case. 
\end{itemize}
For a proof, see the proposition at the end of Appendix \ref{app:GaugingReview}. 
Hence, the strategy is as follows: We add some controlled gates to the operator $O$, with the gauge field qubits being the controlling qubits, so that the modified operator $O'$ commutes with all new Gauss's law stabilizers. This approach is much simpler than expanding $O$ in the Pauli basis. As a result, we find that the old Gauss's law stabilizers are transformed via the gauging map to 
\begin{align}
    (-1)\includegraphics[valign=c,scale=\ScalingFactor]{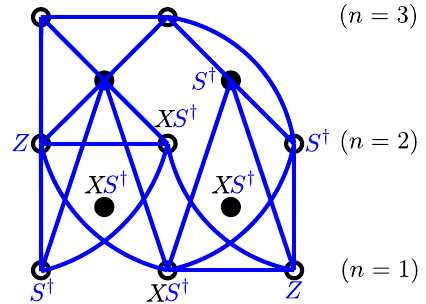}. 
    \label{eq:GaussLaw12Gauged}
\end{align}
The boundary symmetry  $\Xi_{2,3}U_3\Xi_{2,3}^\dagger=M_{2,3}U_3M_{2,3}$ just looks the same as $M_{1,2}U_2M_{1,2}$ up to a shift in layer indices. Hence, after gauging, we get $V_3$ that is just the shifted version of $V_2$. 

\textbf{Step 2C.} Similar to Step 1C, we apply the unitary $M_{2,3}$ in order to simplify the boundary symmetry $V_3$. As a result, we get $U_3'$ that is similar to $U_2'$ in \eqref{eq:U2primed}. Many other things also get transformed: The new Gauss's law stabilizers, originating from the previous step, now looks similar to those in \eqref{eq:GaussLaw12Transformed} modulo a shift in layer indices. The older Gauss's law stabilizers shown in \eqref{eq:GaussLaw12Gauged} are modified as well but since this is not a hard calculation, let us not write down the result explicitly. Hamiltonian terms are all diagonal operators and are hence unaffected. 

\textbf{Step 3.} We shall now repeat Steps 2A-2C for layers $3$ and $4$: Apply $\Xi_{3,4}$, gauge, and then apply $M_{3,4}$. The oldest Gauss's law stabilizers originating from Step 1 reaches their final form: See Fig.\,\ref{fig:DSBulkStabs}a and regard the bottom layer of the plotted operator as $n=1$. 
These will not be further modified in subsequent steps. 
All other changes, including the introduction of new Gauss's law terms, modifications to Hamiltonian terms, and transformations of old Gauss's law stabilizers from Step 2, are the same as what we have seen in Step 2 except for a shift in layer indices. 

Subsequent steps for higher layers will all be similar and we will not state them explicitly. We will not bother to consider what happens near the top layer, or we may imagine an infinite system with no upper bound on $n$. 

\begin{figure*}
    \centering
    \includegraphics[scale=\ScalingFactor]{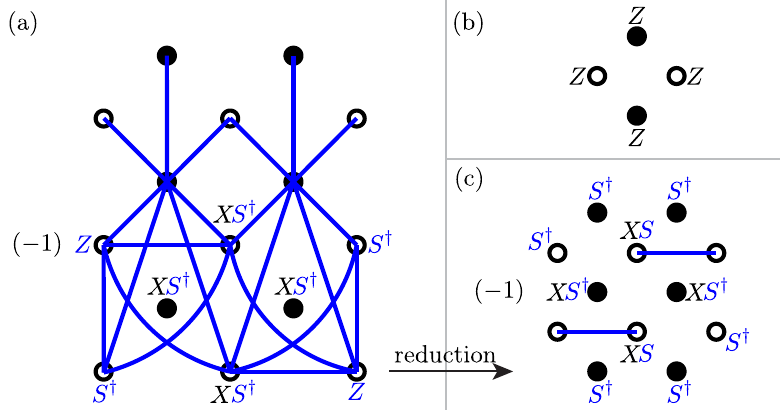}
    \caption{$2d$ bulk stabilizers constructed from the $1d$ anomalous $\mathbb{Z}_2$ symmetry. Each blue bond represents a controlled-$Z$ operator. 
    Stabilizers in (a) are denoted by $S^E_{n+1/2,j}$ where $(n+1/2,j)$ locates the center of the four Pauli-$X$. Stabilizers in (b) are denoted by $S^M_{n,j+1/2}$ where $(n,j+1/2)$ locates the center of the four Pauli-$Z$. When stabilizers in (b) are satisfied, (a) can be reduced to the form of (c) which we denote by $S^{ER}_{n+1/2,j}$. 
    Either (a) and (b), or (c) and (b), lead to the double semion topological order. }
    \label{fig:DSBulkStabs}
\end{figure*}

With all those efforts, we have obtained a new $2d$ theory. The Hilbert space of the theory is constrained by the Gauss's law: Stabilizers as shown in Fig.\,\ref{fig:DSBulkStabs}a are required to have $+1$ eigenvalues. The Hamiltonian stabilizers, which enter the Hamiltonian with a minus sign, are all products of Pauli-$Z$ and look like Fig.\,\ref{fig:DSBulkStabs}b away from the boundary. For later convenience, let us assign some names to those stabilizers: We denote by $S^E_{n+1/2,j}$ the stabilizers in Fig.\,\ref{fig:DSBulkStabs}a where the subscripts indicate the center of the four Pauli-$X$ operators therein. We denote by $S^M_{n,j+1/2}$ the stabilizers in Fig.\,\ref{fig:DSBulkStabs}b where the subscripts indicate the center of the four Pauli-$Z$ operators. Although not obvious, one can check that \footnote{This is actually a consequence of our gauging procedure. See the proposition at the end of Appendix \ref{app:GaugingReview}. } $(S^E_{n+1/2,j})^2=1$, and thus the eigenvalues of each $S^E$ stabilizer can only be $\pm 1$. $S^E$ is unitary since it is the product of multiple unitary operators. It follows that $S^E$ is also hermitian. The Gauss's law can be expressed as $S^E_{n+1/2,j}=1$ for all $n\geq 1$ and $j$. 
The Hamiltonian, denoted by $\tilde H$, reads
\begin{align}
    \tilde H= -\sum_{n\geq 1,j} S^M_{n,j+1/2}. 
\end{align}
Here, $S^M_{n=1,j+1/2}$ for each $j$ acts on one site outside the system (it contains $Z_{-1/2,j+1/2}$) and our convention is to just remove its action on this non-existing site; see the second figure in Eq.\,\ref{eq:GaugedIsingTerms}. Equivalently, we may imagine adding a row of auxiliary sites at $n=-1/2$ and imposing $Z_{-1/2,j+1/2}=1$ for all $j$. Then there is no need to truncate $S^M_{n=1,j+1/2}$. 

\subsection{Ground States and Simplified Stabilizers}

We now move on to determine the bulk topological order of the new theory $\tilde H$. To this end, it is convenient to consider periodic boundary conditions in both spatial directions. Namely, we will make the identifications $j\sim j+N_1$ and $n\sim n+N_2$ for some positive integers $N_1,N_2$. The total number of qubits in the system is then $N=2N_1N_2$. In this subsection, we will show that on such torus geometries, our model is gapped and has four-fold degenerate ground states. Moreover, both the energy gap and the ground state degeneracy are stable against perturbations, implying a nontrivial topological order. We will also show that the rather complicated stabilizers in Fig.\,\ref{fig:DSBulkStabs}a may be replaced by those in Fig.\,\ref{fig:DSBulkStabs}c without altering the topological order. 

To proceed, it is necessary to do some calculations with the stabilizers in Fig.\,\ref{fig:DSBulkStabs}, which could be annoying. There is actually a nice method to greatly simplify the work. The key observation is that all stabilizers in Fig.\,\ref{fig:DSBulkStabs} belong to the following operator group:
\begin{align}
    \left\{ \lambda\prod_{a\in\mc I}X_a\prod_{\{b,c\}\in\mc J}\CZ_{b,c}\prod_{(k,d)\in \mc K}S^k_d  \right\}, 
\end{align}
where $\lambda$ is a nonzero complex coefficient, $a,b,c,d$ are all site indices, and $k\in\mathbb{Z}_4$. Saving an operator from this group in a computer is memory efficient -- one just need to record $\lambda$ and the index sets $\mc I,\mc J,\mc K$. Computing operator multiplication and inverse are also numerically efficient. Many of the results presented below are actually obtained numerically using this approach. 

\textbf{No Frustration. }All the $S^E$ and $S^M$ stabilizers commute with each other, which is a consequence of our gauging procedures \footnote{See the proposition at the end of Appendix \ref{app:GaugingReview}. } and can also be directly checked. As a result,  energy eigenstates of the theory can be taken to be simultaneous eigenstates of all stabilizers. We shall first show that the theory is frustration free: The ground states satisfy all stabilizers. 

Let $\ket{0}$ and $\ket{1}$ be the two eigenstates of Pauli-$Z$ with eigenvalues $\pm 1$, respectively. We consider the following unnormalized ground state candidate. 
\begin{align}
    \prod_{n,j}\left(\frac{1+S^E_{n+1/2,j}}{2}\right)\ket{0}^{\otimes N}
    \label{eq:DSGndState00}
\end{align}
As long as this state is non-vanishing, it must be a ground state because it by construction satisfies all the stabilizers. We can try to expand the product of projectors into a sum of monomials in the $S^E$ stabilizers and think about the action of each monomial on the all-up spin configuration $\ket{0}^{\otimes N}$. Recall that each $S^E$ stabilizer is the product of four Pauli-$X$ and certain $Z$-basis diagonal unitary operators. When acting on any $Z$-basis eigenstate, the diagonal unitaries will only leave a phase factor, while the four Pauli-$X$ operators will change the spin configuration on the corresponding sites. This suggests that the only spin all-up component of the above state is as follows: 
\begin{align}
    2^{-N_1N_2}(1+\prod_{n,j}S^E_{n+1/2,j})\ket{0}^{\otimes N}, 
\end{align}
which turns out to be non-vanishing because we have found  
\begin{align}
    \prod_{n,j}S^E_{n+1/2,j}=1. 
\end{align}
This implies that the state in \eqref{eq:DSGndState00} is indeed non-vanishing and is a valid ground state of the system. 

\textbf{Ground State Degeneracy. }Given any $Z$-basis eigenstate that satisfies all the $S^M$ stabilizers, applying the $S^E$ projectors as in \eqref{eq:DSGndState00} generates a ground state of the theory. Conversely, any ground state belongs to the space spanned by such projected states because we can expand any state as a superposition of $Z$-basis spin configurations. We claim that there are four independent ground states generated by the following states. 
\begin{align}
    \left(\prod_jX_{n_0,j}\right)^x\left(\prod_n X_{n+1/2,j_0+1/2}\right)^y\ket{0}^{\otimes N}, 
\end{align}
where $x,y\in\{0,1\}$ and $n_0,j_0$ are arbitrary fixed integers. To prove this statement, we observe the following fact: If two $Z$-basis eigenstates can be mapped to each other up to a phase by the action of several $S^E$ stabilizers, these two states will generate equivalent energy eigenstates after applying the $S^E$ projectors as in \eqref{eq:DSGndState00}. Then we apply an argument often used for the toric code case: $Z$-basis spin configurations satisfying all $S^M$'s can be regarded as loop configurations if we identify white (black) lattice sites as horizontal (vertical) links. The $S^E$ stabilizers can create or annihilate any contractible loops. Hence, the independent degrees of freedom labeling the different ground states are the presence/absence of non-contractible loops around the two periodic directions, as in the equation above. We have thus proved that the model has four-fold degenerate ground states. There is also obviously an energy gap above these states, given that all stabilizers can be simultaneously diagonalized. 

\textbf{Robustness against Perturbations. }
In order to confirm that our model indeed has a nontrivial topological order, we would like to show that the ground state degeneracy is not accidental or due to any conventional (0-form) symmetry breaking. To this end, we shall prove the following claim: Any local operator $O$ that commutes with all stabilizers in the theory can be generated by those stabilizers. This condition suggests that local perturbations are only able to lift the ground state degeneracy at $n$-th order in perturbation theory, where $n$ is about the linear size of the system. It is a commonly used condition for proving a nontrivial topological order, though there is a caveat to be mentioned below. 

A proof of this claim is sketched as follows. We can uniquely decompose the operator $O$ in the following form: 
\begin{align}
    O=\sum_A \lambda_A X_A D_A, 
\end{align}
where $A$ runs over all subsets of the lattice, $\lambda_A$ is a number coefficient, $X_A$ is the product of Pauli-$X$ on all qubits in $A$, and $D_A$ is a diagonal operator labeled by $A$ but not necessarily acting within $A$. Each term in the above decomposition must seperately commute with all stabilizers of the theory. Therefore, we may simply assume that $O=X_A D$ for some subset $A$ and diagonal operator $D$. In order to commute with the $S^M$ stabilizers, sites in $A$ must form loops when we regard white (black) sites as horizontal (vertical) links. Moreover, those loops are all contractible due to the locality of $O$. As a result, $X_A D$ can be expressed as a product of several $S^E$ stabilizers and some diagonal operator $D'$. We can uniquely decompose $D'$ as $D'=\sum_B\lambda'_B Z_B$ where similarly, $B$ runs over lattice subsets, $\lambda_B'$ is a number coefficient, and $Z_B$ is the product of Pauli-$Z$ on all qubits in $B$. Each term must seperately commute with all stabilizers. In order for each $Z_B$ with $\lambda_B'\neq 0$ to commute with the $S^E$ stabilizers, the sites in $B$ must form loops when we regard black (white) sites as horizonal (vertical) links. As a result, each such $Z_B$ is a product of several $S^M$ stabilizers and this completes the proof. 

Strictly speaking, the condition proved above is not sufficient to guarantee a nontrivial topological order. It is possible to construct an artificial model that satisfies the condition above but has an unstable energy gap \cite{Bravyi2010TQO}: The energy gap closes under a certain local perturbation whose local strength can be made arbitrarily small as the system size increases. It is possible to show that our model does not suffer from this problem by proving an additional condition given in Ref.\,\onlinecite{Bravyi2010TQO}. However, since such pathological examples are rare in practice, let us just stop here and not get into more technical details along this direction.


\textbf{Reduced Stabilizers. }Given the frustration free condition, we can simplify our model stabilizers if only the ground states or the topological phase of the system are concerned. In the subspace where all the stabilizers in Fig.\,\ref{fig:DSBulkStabs}b are satisfied, namely when $S^M_{n,j+1/2}=1$ for all $n$ and $j$, the stabilizers in Fig.\,\ref{fig:DSBulkStabs}a are equivalent to those in Fig.\,\ref{fig:DSBulkStabs}c. In obtaining this reduction, we have utilized the following relations under the condition $Z_1Z_2Z_3Z_4=1$:
\begin{itemize}
    \item $\CZ_{1,j}\CZ_{2,j}\CZ_{3,j}\CZ_{4,j}=1$ where $\CZ_{j,j}$ should be understood as $Z_j$ when $j\in\{1,2,3,4\}$. 
    \item $\CZ_{2,3}\CZ_{2,4}\CZ_{3,4}=S_1^\dagger S_2S_3S_4=S_1S_2^\dagger S_3^\dagger S_4^\dagger$, which may be derived using \eqref{eq:tildeLrelation}. 
\end{itemize}
Let us denote the stabilizers in Fig.\,\ref{fig:DSBulkStabs}c by $S^{ER}_{n+1/2,j}$ where again the subscripts indicate the center of the four Pauli-$X$ operators. We can thus replace the Gauss's law constraints $S^E=1$ by $S^{ER}=1$ without altering the ground states or the topological phase. 

We may soften the Gauss's law constraints by adding them to the Hamiltonian and enforcing them energetically, which will not alter the ground states but will allow for more general excitations. For example, using the original stabilizers $S^E$ and $S^M$, we may write
\begin{align}
    H=-\sum_{n,j}(S^{E}_{n+1/2,j}+S^M_{n,j+1/2}) 
\end{align}
and impose no hard constraint on the Hilbert space. 
It is a little trickier to add the $S^{ER}$ stabilizers to the Hamiltonian because they are not hermitian but instead satisfy $(S^{ER})^4=1$ or equivalently $(S^{ER})^\dagger=(S^{ER})^3$. One possibility is to write
\begin{align}
    H^R=-\sum_{n,j}\left[\frac{S^{ER}_{n+1/2,j}+(S^{ER}_{n+1/2,j})^\dagger}{2}
    +S^M_{n,j+1/2}\right]. 
    \label{eq:HDSReduced}
\end{align}
We emphasize that our previous proof of the robustness against perturbations apply to all those alternative forms of the theory. 

\subsection{Double Semion Topological Order}
\begin{figure}
    \centering
    \includegraphics[scale=\ScalingFactor]{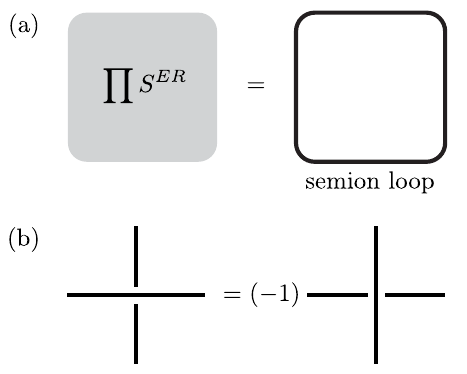}
    \caption{Schematic illustration of the semion line operators. (a) Taking the product of $S^{ER}$ stabilizers in a region generates a semion loop operator supported on the region boundary. (b) The semion statistics manifests in the braiding property of the line operators. }
    \label{fig:SemionLoopSchematic}
\end{figure}
We will now show that our model realizes the double semion topological order \cite{LevinWen2005}, which has the same torus ground state degeneracy as the toric code but has different anyon statistics. 

\textbf{Anyon Statistics. }In order to analyze anyon statistics, we need to construct line operators that can create, annihilate, and move anyons. Similar to the toric code case, such line operators can be found by multiplying stabilizers together. For concreteness, we will focus on the reduced stabilizers or the Hamiltonian in \eqref{eq:HDSReduced}. We have found that the product of all $S^{ER}$ stabilizers in a region, such as a rectangular illustrated in Fig.\,\ref{fig:SemionLoopSchematic}a, equals to a loop operator supported on the region boundary. The explicit form of the loop operator is given in Appendix \ref{app:DS} for readers interested in the details. An open segment of this loop operator will generate a dual pair of anyons at its open ends. It turns out that the upper and lower edges of the loop operator have the same form, and similarly, the left and right edges also look the same, hence this loop is unoriented and the corresponding anyon is its own antiparticle. We will denote by $m_a$ the type of anyons created by open segments of this loop operator. One can check that the $m_a$-type anyon lines satisfy an anticommutation relation as shown in Fig.\,\ref{fig:ERLoop}b, it follows that $m_a$ is a \emph{semion}: A full braiding of two $m_a$-type anyons (exchanging twice) gives a minus sign. 

One can similarly generate another type of loop operator by multiplying the $S^M$ stabilizers and the result is familiar from the toric code model. We denote by $e$ the corresponding anyon type which is also its own antiparticle. $e$ is a boson but a full braiding of $e$ and $m_a$ gives a minus sign due to a similar anticommutation relation. Let $m_b:=e\times m_a$ be the fusion of these two anyon types. It follows from our results above that $m_b$ is also a semion and a full braiding of $e$ and $m_b$ also leaves a minus sign. 

These results are sufficient to distinguish the topological order of our model to that of the toric code, and are consistent with the double semion topological order. 

\textbf{Anomalous Boundary Symmetry. }We can also verify that our model indeed enforces an anomalous boundary $\mathbb{Z}_2$ symmetry in the presence of a spatial boundary, which is our starting point. Let us still assume the horizontal site index $j$ to be periodic, but take an open boundary condition along the layering direction: restricting to $1\leq n\leq N_2$. Consider a Hamiltonian of the form $H_{\rm bulk}+H_{\rm bdry}$. We choose $H_{\rm bulk}$ to be 
\begin{align}
    H_{\rm bulk}=&-\sum_{n=1}^{N_2-1}\sum_j\left[\frac{S^{ER}_{n+1/2,j}+(S^{ER}_{n+1/2,j})^\dagger}{2}\right]\nonumber\\
    &-\sum_{n=2}^{N_2-1}\sum_j S^M_{n,j+1/2}  
\end{align}
where $S^{ER}_{n+1/2,j}$ for $n=1$ and $n=N_2-1$ should be truncated in order to be well defined. We let $H_{\rm bdry}$ be a sum of local operators that are supported near the boundary and commute with $H_{\rm bulk}$. One can check that the product of all $S^{ER}$ and all $S^M$ entering $H_{\rm bulk}$ equals to $U_1U_{N_2}$. This result suggests in the absence of bulk excitations, an anomalous $\mathbb{Z}_2$ symmetry is enforced on the effective $1d$ boundary theory. 

\textbf{Other Models in the Literature. }We shall mention that there exist other square lattice double semion models in the literature: A $\mathbb{Z}_4$ Pauli stabilizer model is given in Ref.\,\onlinecite{Ellison2021PauliTQD}. Another qubit lattice model is given in Ref.\,\onlinecite{Chen2021HigherCup}. The latter looks similar to our reduced model $H^R$, but it is unclear whether there is a simple way of relating them. 

\section{A Nonabelian Example}\label{sec:NonabelianExample}

In this section, we will consider the simplest nonabelian example. 
Our goal is to construct a $2d$ topological order that characterizes the physics of an onsite finite group symmetry $G$ in $1d$. We have outlined the generalized bulk construction procedure in Section \ref{sec:PrescriptionGeneralized} and we will now see how it works precisely. 

Consider a 1$d$ chain of sites labeled by $j=1,2,\cdots, N_1$. The local Hilbert space on each site will be spanned by the orthonormal group element basis states $\{ \ket{g}|g\in G \}$. We denote by $P_j(g):=\ket{g}_j\bra{g}_j$ the projector onto the group element state $\ket{g}_j$ on site $j$.  Let $L_j(h)$ and $R_j(h)$ be respectively the left and right regular representation operators acting on site $j$, defined as
\begin{align}
    L_j(h)\ket{g}_j=\ket{hg}_j,\quad R_j(h)\ket{g}_j=\ket{gh^{-1}}_j.
\end{align}
The $1d$ Hamiltonian that we will use is 
\begin{align}
    H_G=\sum_j\sum_{g\in G}\left[-JP_j(g)P_{j+1}(g) -\eta|G|^{-1}L_j(g)\right], 
\end{align}
with $J,\eta\geq 0$. This Hamiltonian is a generalization of the transverse-field Ising model $H_{{\rm Ising},1d}$ considered in Section \ref{sec:ElementaryExamples}. It has both a ``left symmetry'' $G_L$ generated by $\rho_L(g):=\prod_jL_j(g)$ and a ``right symmetry'' $G_R$ generated by $\rho_R(g):=\prod_jR_j(g)$. Importantly, $\rho_L(g)$ and $\rho_R(h)$ commute with each other for all $g,h\in G$. We will thus say that the system has a $G_L\times G_R$ symmetry, although its representation is not faithful when $G$ has a nontrivial center. 

For our purpose of constructing a bulk topological order, we focus on the case $J> 0$ and $\eta=0$. 
The vacuum states \footnote{We use the term ``vacuum states'' for ground states satisfying the cluster decomposition property, i.e. ``non-cat'' states. } of $H_G$ are then given by $\ket{\Psi_g}:=\bigotimes_j\ket{g}_j$ for all $g\in G$. Hence, either $G_L$ or $G_R$ has been spontaneously broken completely. However, a $G$ subgroup of $G_L\times G_R$ has been preserved: $\rho_L(ghg^{-1})\rho_R(h)$ for all $h\in G$ leaves the state $\ket{\Psi_g}$ invariant. Note that as a generic feature of nonabelian symmetry breaking, different vacuum states preserve distinct but isomorphic symmetry subgroups. 

Our generalized layered gauging procedure is to stack many copies of the above $1d$ model, labeled by $n$, and then sequentially gauge the bilayer $G$ symmetries generated by $\rho_{n,L}(g)\rho_{n+1,R}(g)$. To implement this procedure, let us first recall how to gauge a nonabelian symmetry $G$ in $1d$. In general, suppose we have some $1d$ onsite $G$ symmetry generated by $\rho(g)=\prod_jU_j(g)$ with $j=1,2,3,\cdots$ being the site index. The gauging procedure starts by introducing gauge field degrees of freedom at half-integer sites $j+1/2$. The Hilbert space of each gauge field site is spanned by the group element basis $\{ \ket{g}|g\in G \}$ and supports left and right actions as we have already seen. The Gauss's law operators, or gauge symmetry generators, are given by $S_j(g):=R_{j-1/2}(g)U_j(g)L_{j+1/2}(g)$ for all $j$ and $g\in G$. Similar to the abelian case, we shall then couple the Hamiltonian terms to the gauge field so as to respect the gauge symmetry, and finally restrict the Hilbert space to the gauge invariant subspace satisfying $S_j(g)=1$ for all $j$ and $g$. In our case, when considering the $G$ symmetry generated by $\rho_{n,L}(g)\rho_{n+1,R}(g)$, $U_j(g)$ will be $L_{n,j}(g)R_{n+1,j}(g)$. 

\begin{figure}
    \centering
    \includegraphics[scale=\ScalingFactor]{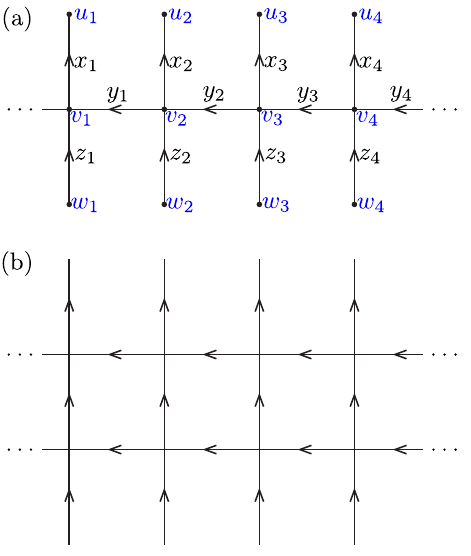}
    \caption{Layered gauging for the $1d$ onsite $G$-symmetry. (a) The first two layers and the corresponding gauge field degrees of freedom. (b) The first three layers. }
    \label{fig:GSymmTwoNThreeLayers}
\end{figure}

When dealing with nonabelian group element states, keeping track of left and right group actions can be quite cumbersome. There turns out to be a less painful and pictorial way of thinking about such states, which we shall introduce now. We associate each group spin, whose Hilbert space is spanned by the group element states $\{ \ket{g} \}$, to an \emph{oriented edge} (oriented link). We then define a rule: Reversing the orientation of a particular edge is equivalent to the basis change $\ket{g}\mapsto \ket{g^{-1}}$ on that edge. Let $e$ be an edge containing a vertex (site) $v$. With the above rule, we define a group multiplication operator $X_{e,v}(h)$ as follows: 
\begin{align}
    X_{e,v}(h)\bigg|\includegraphics[valign=c,scale=\ScalingFactor]{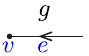}\bigg\rangle
    =\bigg|\includegraphics[valign=c,scale=\ScalingFactor]{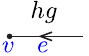}\bigg\rangle. 
\end{align}
In this equation, $e$ is an incoming edge with respect to $v$. The case of an outgoing edge follows directly from the orientation reversing rule. More explicitly, if we denote by $v_+(e)$ and $v_-(e)$ the starting and ending vertices of the edge $e$, then $X_{e,v_-(e)}(h)\ket{g}_e=\ket{hg}_e$ and $X_{e,v_+(e)}(h)\ket{g}_e=\ket{gh^{-1}}_e$. We see that left and right group multiplication operators have been unified to a single notation. 

Using this oriented edge representation, we can draw the lattice after gauging $\rho_{1,L}(g)\rho_{2,R}(g)$ as Fig.\,\ref{fig:GSymmTwoNThreeLayers}a. Here, $z_j$ ($x_j$) are the group elements on the first (second) layer and $y_j$ are the gauge field group elements. $u_j,v_j,w_j$ are labels for vertices which are purely geometric objects and contain no physical degrees of freedom. 
The Gauss's law can be written as 
\begin{align}
    A_{v_j}(h)=1\quad \text{for all $v_j$ and $h\in G$}, 
\end{align}
where the operator $A_v(h)$ for a vertex $v$ and a group element $h$ is defined as $A_v(h)=\prod_{e\supset v}X_{e,v}(h)$. Pictorially, 
\begin{align}
    A_v(h)\Bigg|\includegraphics[valign=c,scale=\ScalingFactor]{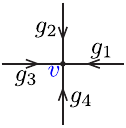}\Bigg\rangle=\Bigg|\includegraphics[valign=c,scale=\ScalingFactor]{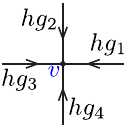}\Bigg\rangle. 
\end{align}
The original Hamiltonian terms are projectors onto the subspace where $x_jx_{j+1}^{-1}=1$ and $z_j^{-1}z_{j+1}=1$ (recall we set $\eta=0$). In the presence of gauge field, these need to be modified to
\begin{align}
    x_jy_jx_{j+1}^{-1}=1\quad\text{and}\quad z_j^{-1}y_jz_{j+1}=1
\end{align}
in order to preserve the gauge symmetry. In other words, after gauging, the Hamiltonian terms are updated to a new set of projectors which project onto the subspace with the above conditions. There is also a nice pictorial way of remembering these conditions. Given any oriented path on the lattice from one vertex to another, when the edges on the path all have consistent orientations with the path direction, we define the \emph{holonomy} of the path to the product of all group elements on the path in the reversed order. For example, in Fig.\,\ref{fig:GSymmTwoNThreeLayers}a, the holonomy of the path defined by the vertex sequence $(v_4,v_3,v_2,v_1)$ is $y_1y_2y_3$. 
With this definition, all the conditions above can be regarded as trivial holonomy conditions. For example, the holonomy of the path $(u_2,v_2,v_1,u_1)$ is $x_1y_1x_2^{-1}$. 

We can now proceed to gauge the symmetry generated by $\rho_{2,L}(g)\rho_{3,R}(g)$. This is possible thanks to the fact that $\rho_{2,L}(g)$ commute with the existing Gauss's law operators $A_{v_j}(h)$, which is also why we require the $1d$ model to have a $G_L\times G_R$ symmetry. After this gauging, the lattice becomes that in Fig.\,\ref{fig:GSymmTwoNThreeLayers}b. The Gauss's law constraints are $A_v(h)=1$ for all internal vertices $v$ and all $h\in G$. In the Hamiltonian, there are similar three-body terms at the topmost and bottommost layers, and also new four-body operators associated to the square plaquettes. More precisely, for each square plaquette, if we denote its boundary loop by $f$, there is a Hamiltonian term $-JB_f$ defined by the following equation: 
\begin{align}
    B_{f}\Bigg|\includegraphics[valign=c,scale=\ScalingFactor]{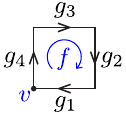}
    \Bigg\rangle
    =\delta_{g_1g_2g_3g_4,1}\Bigg|\includegraphics[valign=c,scale=\ScalingFactor]{LoopHolonomy.pdf}
    \Bigg\rangle.
    \label{eq:BfDef}
\end{align}
That is, $B_f$ projects onto the subspace of trivial holonomy around the loop $f$. Its definition does not depend on the base point $v$ of the loop or the loop orientation. 

We can continue this layered gauging procedure. Eventually, we obtain a bulk theory defined on a square lattice with group spins living on its links. Say we take the periodic boundary condition in both directions, the bulk Hamiltonian will take the following form: 
\begin{align}
    \tilde H= -J\sum_f B_f. 
\end{align}
Moreover, the Hilbert space of the theory is subject to the constraints $A_v(h)=1$ for all $v$ and $h\in G$. We may soften these constraints by adding the projectors $A_v:=|G|^{-1}\sum_{h\in G} A_v(h)$ to the Hamiltonian, as one may check that $A_v(g)A_v=A_v$. This gives the Hamiltonian
\begin{align}
    H=-\lambda\sum_v A_v-J\sum_f B_f
\end{align}
with $\lambda>0$. 
This is the famous quantum double model \cite{KitaevTC}, which generalizes the toric code and realizes a nonabelian topological order when $G$ is a nonabelian group. The model is frustration free and hence its ground states and topological properties do not depend on the precise magnitudes of $\lambda$ and $J$. 

\section{A Noninvertible Example}\label{sec:NoninvertibleExample}
Finally in this section, we will consider the simplest example with a noninvertible (fusion category) symmetry. This example is closely related to the one in Section \ref{sec:NonabelianExample} and we will also use some of the notations introduced there. 

Our starting point is again a 1$d$ periodic chain of group spins labeled by $j=1,2,\cdots,N_1$. The local Hilbert space of each group spin is spanned by the group element states $\{\ket{g}|g\in G\}$ for some finite group $G$. The Hamiltonian we shall use is 
\begin{align}
    H_{{\rm Rep}(G)}=\sum_j\left[ -\frac{\eta}{|G|}\sum_{g\in G}R_j(g)L_{j+1}(g)-J\ket{1}_j\bra{1}_j\right] 
\end{align}
with $\eta,J\geq 0$. Here, $L_j(g)$ and $R_j(g)$ as already defined in Section \ref{sec:NonabelianExample} generate left and right regular representations on site $j$, respectively. That is, $L_j(g)$ (respectively $R_j(g)$) acts as left (right) multiplication by $g$ ($g^{-1}$). $1\in G$ is the identity group element. Using the oriented link representation of group spins as introduced in Section \ref{sec:NonabelianExample}, we draw our 1$d$ lattice in the form of Fig.\,\ref{fig:RepGSymmExample}a. Here, $g_i$ label the group element basis states living on the oriented links. $v_i$ label the vertices which are purely geometric objects containing no physical degrees of freedom. With this picture, the operator $R_j(g)L_{j+1}(g)$ in the Hamiltonian can also be written as $A_{v_j}(g)$ which is defined in Section \ref{sec:NonabelianExample}. 

\begin{figure}
    \centering
    \includegraphics[scale=\ScalingFactor]{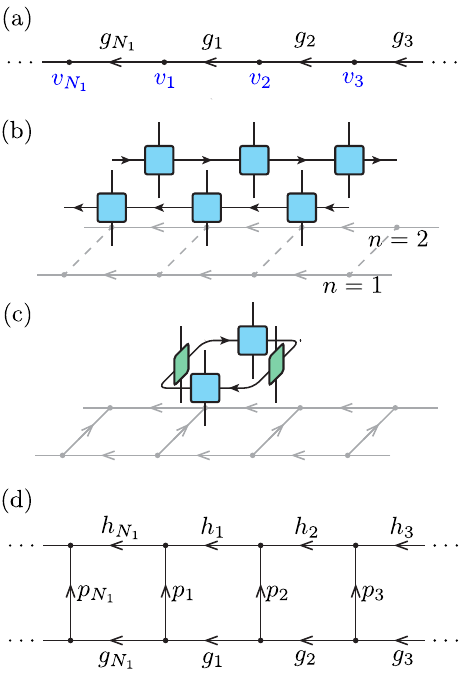}
    \caption{Layered gauging for the $1d$ ${\rm Rep}(G)$ symmetry. (a) Lattice geometry of a single layer. (b) Tensor network picture of the bilayer symmetry operators. Dashed lines on the lattice are to guide the eyes and have no physical meaning at the moment. (c) Local gauge symmetry generators from the gauging-like procedure. (d) Lattice geometry after the``gauging'' for the first two layers. }
    \label{fig:RepGSymmExample}
\end{figure}

We observe that (1) the first term of $H_{{\rm Rep}(G)}$ preserves the conjugacy class of $\phi:=g_1g_2\cdots g_{N_1}$ and (2) the second term of $H_{{\rm Rep}(G)}$ is diagonal in the group element basis. Therefore, any diagonal operator that is a function of the conjugacy class of $\phi$ commutes with the Hamiltonian and serves as a generalized symmetry operator. A simple basis for such operators are the conjugacy class projectors $\Pi_C$ defined as follows: $\Pi_C$ evaluates to $+1$ if the conjugacy class of $\phi$ is $C$ and evaluates to zero otherwise. There is also an alternative basis that is more convenient for our purpose: 
\begin{align}
    &\mc N_\mu:=\sum_{\{g_j\}}\nonumber\\
    &\chi_\mu(g_1g_2\cdots g_{N_1})\ket{g_1,g_2,\cdots ,g_{N_1}}\bra{g_1,g_2,\cdots ,g_{N_1}},  
    \label{eq:NmuDef}
\end{align}
where $\mu$ labels irreducible representations (irreps) of the group $G$ and $\chi_\mu(\cdot)$ is the character of the irrep $\mu$. Owing to the fact that irrep characters form a basis for complex class functions, $\{\mc N_\mu\}$ and $\{ \Pi_C \}$ span the same operator space. A nice property of the $\mc N_\mu$ operators is that they are \emph{matrix product operators}, which are generalizations of tensor product operators. More explicitly, if we denote by $D^{\mu}(g)$ the representation matrix for $g$ in the irrep $\mu$, we can write
\begin{align}
    \mc N_\mu=\sum_{\{g_j\}}\Tr[D^\mu(g_1)D^\mu(g_2)\cdots D^\mu(g_{N_1})]\bigotimes_j(\ket{g_j}\bra{g_j}). 
\end{align}
Pictorially, let 
\begin{align}
    \includegraphics[valign=c,scale=\ScalingFactor]{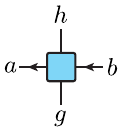}=\delta_{g,h}D^\mu_{ab}(g), 
\end{align}
where the irrep index $\mu$ is not explicitly shown in the picture. The operator ${\mc N}_\mu$ can be represented by 
\begin{align}
    &\bra{h_1,h_2,h_3,\cdots}\mc N_\mu\ket{g_1,g_2,g_3,\cdots}=\nonumber\\
    &\quad\includegraphics[valign=c,scale=\ScalingFactor]{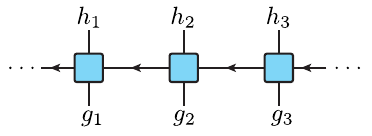}, 
\end{align}
where the horizontal legs of the tensors, also called virtual legs, have been contracted. It is not hard to check from \eqref{eq:NmuDef} that the $\mc N_\mu$ operators satisfy the same fusion rule as the irreps. That is, if 
\begin{align}
    [\mu]\otimes[\nu]=\bigoplus_\lambda C_{\mu\nu}^{\lambda}[\lambda], 
\end{align}
we also have 
\begin{align}
    \mc N_\mu\mc N_\nu=\sum_\lambda C_{\mu\nu}^\lambda \mc N_\lambda. 
    \label{eq:RepGSymmetryFusionRule}
\end{align}
For this reason, it is customary to say that $\{ \mc N_\mu \}$ generates a ${\rm Rep}(G)$ symmetry, which is an example of a noninvertible symmetry or fusion category symmetry. For later convenience, we also define the orientation reversed counterpart of $\mc N_\mu$ (the orientation reversing rule in Section \ref{sec:NonabelianExample} is used): 
\begin{align}
    \bar{\mc N}_\mu&:=\sum_{\{g_j\}}\chi_\mu(g_{N_1}^{-1}\cdots g_2^{-1}g_1^{-1})\ket{\{g_j\}}\bra{\{g_j\}}\\
    &=\sum_{\{g_j\}}\bar\chi_\mu(g_1g_2\cdots g_{N_1})\ket{\{g_j\}}\bra{\{g_j\}}=\mc N_{\bar \mu}, 
\end{align}
where $\bar\mu$ denotes the conjugate representation of $\mu$. $\bar{\mc N}_\mu$ is analogous to the inverse operator in invertible symmetries because $[\mu]\otimes [\bar\mu]$ contains the trivial representation. 

As a side note, if we gauge the $G_L$ symmetry of the model $H_G$ introduced in Section \ref{sec:NonabelianExample}, the resulting gauge theory is equivalent to $H_{{\rm Rep}(G)}$. To show this explicitly, one needs to apply a bunch of ${\rm CNOT}$-like unitary gates which localize the Gauss's law operators on single sites. Those redundant sites can then be removed and the obtained unconstrained theory is exactly $H_{{\rm Rep}(G)}$. We will not use this result but we hope it provides a natural way of writing down the Hamiltonian $H_{{\rm Rep}(G)}$. 

For our bulk construction purpose, we focus on $\eta>0$ and $J=0$. The vacuum states of $H_{{\rm Rep}(G)}$ with those parameters are as follows \footnote{One can verify the cluster decomposition property of these states using a matrix product state representation and the orthogonality relation \eqref{eq:OrthRelation}. }:
\begin{align}
    \ket{\Psi_\mu}=\sum_{\{g_j\}}\chi_\mu(g_1g_2\cdots g_{N_1})\ket{g_1,g_2,\cdots,g_{N_1}}. 
\end{align}
These states spontaneously break the ${\rm Rep}(G)$ symmetry since 
\begin{align}
    \mc N_\mu\ket{\Psi_\nu}=C_{\mu\nu}^\lambda\ket{\Psi_\lambda}. 
\end{align}

Mimicking our layered gauging construction for abelian invertible symmetries, we would like to stack many copies of the $1d$ model $H_{{\rm Rep}(G)}$, labeled by $n=1,2,3,\cdots$, and then sequentially gauge the bilayer symmetries generated by $\mc N_{n,\mu}\bar{\mc N}_{n+1,\mu}$. 
See Fig.\,\ref{fig:RepGSymmExample}b for an illustration of $\mc N_{1,\mu}\bar{\mc N}_{2,\mu}$ acting on our 2$d$ layered lattice. An immediate issue we encounter is that the operators $\{ \mc N_{n,\mu}\bar{\mc N}_{n+1,\mu} \}$ do not generate a ${\rm Rep}(G)$ symmetry, as the fusion rule \eqref{eq:RepGSymmetryFusionRule} is not group like. Therefore, it may sound weird to gauge the whatever symmetry generated by $\{ \mc N_{n,\mu}\bar{\mc N}_{n+1,\mu} \}$. We note that this issue may be regarded as a problem with the boundary condition: If we cut open the $\mc N_\mu$ and $\bar{\mc N}_\mu$ loops and join them together to form a single new loop, then these loop operators will satisfy the ${\rm Rep}(G)$ fusion rule. Despite this apparent issue, we found that there is a natural gauging-like procedure to promote the global symmetry operators $\mc N_{n,\mu}\bar{\mc N}_{n+1,\mu}$ to local ``gauge'' symmetry generators. This is illustrated in Fig.\,\ref{fig:RepGSymmExample}c. That is, we introduce new ``gauge field'' degrees of freedom as ladder rungs of the lattice and then make up small closed loops using the local tensors in the symmetry operators $\mc N_{n,\mu}$, $\bar{\mc N}_{n+1,\mu}$. What local Hilbert space should we choose for the gauge field? From the figure, we see that each gauge field link should support local tensors contractible with those in $\mc N_{n,\mu}$ and $\bar{\mc N}_{n+1,\mu}$. Hence, the most natural choice is to also put group spins on those links and use the same set of tensors to form local loop operators. Applying this procedure to the first two layers, the lattice geometry becomes that in Fig.\,\ref{fig:RepGSymmExample}d. 
The gauge symmetry generators, or Gauss's law operators, are given by: 
\begin{align}
    &S^{(1,j)}_{\mu}:=\sum_{g_j,h_j,p_j,p_{j+1}}\nonumber\\
    &\chi_\mu(g_jp_{j+1}^{-1}h_j^{-1}p_j)\ket{g_j,h_j,p_j,p_{j+1}}\bra{g_j,h_j,p_j,p_{j+1}},  
\end{align}
where the group element labels are shown in Fig.\,\ref{fig:RepGSymmExample}d. Note that these operators \emph{do} satisfy the ${\rm Rep}(G)$ fusion rule. 
Using Schur's orthogonality relation
\begin{align}
    \sum_{g\in G}\bar D^\mu_{ab}(g)D^\nu_{cd}(g)=\frac{|G|}{d_\mu}\delta_{\mu\nu}\delta_{ac}\delta_{bd}, 
    \label{eq:OrthRelation}
\end{align}
one can also check that when all the gauge field links are in the state $\ket{+}:=|G|^{-1}\sum_{g\in G}\ket{g}$, $\prod_{j}S^{(1,j)}_\mu$ is proportional to $\mc N_{n,\mu}\bar{\mc N}_{n+1,\mu}$. 
Shall we then demand $S^{(1,j)}_\mu=1$ as the Gauss's law? This is actually not possible because it contradicts the operator fusion rule. In other words, $[\mu]\mapsto 1$ is not a valid representation of the ${\rm Rep}(G)$ category. We should instead regard $[\mu]\mapsto d_\mu$ as the analog of trivial representation in the group case, where $d_\mu$ is the Hilbert space dimension of the irrep $\mu$. We therefore declare the following Gauss's law: 
\begin{align}
    S^{(1,j)}_\mu=d_\mu
\end{align}
for all $j$ and $\mu$. This is actually equivalent to the trivial flux condition: 
\begin{align}
    g_jp_{j+1}^{-1}h_j^{-1}p_j=1
\end{align}
for all $j$. 

Next, we shall couple the Hamiltonian terms to the gauge field so as to respect the gauge symmetry. This is not hard in our case. Recall that the original Hamiltonian (after setting $J=0$) consists of two-body operators $|G|^{-1}\sum_g A_v(g)$. After gauging, we can still write the modified operators as $|G|^{-1}\sum_g A_v(g)$, but these are now three-body operators acting on the new lattice in Fig.\,\ref{fig:RepGSymmExample}d. One can check that when the gauge field links are in the $\ket{+}$ states, those modified operators reduce to their original form. 

Similar gauging-like procedures can be implemented for other pairs of layers. We will skip the details and directly present the final result. The final bulk theory is defined on a square lattice with group spins living on the links. If we take the periodic boundary condition in the layering direction as well, the Hamiltonian is 
\begin{align}
    \tilde H=-\eta \sum_v A_v
\end{align}
where the summation is over all vertices and $A_v:=|G|^{-1}\sum_{g\in G}A_v(g)$ are now four-body operators according to the square lattice geometry. In addition, there is a Gauss's law constraint $B_f=1$ for each plaquette $f$, where $B_f$ as defined in \eqref{eq:BfDef} projects onto the trivial flux subspace. If we soften those Hilbert space constraints, we will again obtain the quantum double model: 
\begin{align}
    H= -\eta\sum_v A_v-\lambda \sum_f B_f. 
\end{align}
We have recovered the well-known result that both $G$ and ${\rm Rep}(G)$ symmetries in $1d$ correspond to the quantum double topological order in $2d$.

\section{Discussion}\label{sec:Discussion}
To summarize, we propose layered gauging as a new method for constructing $(k+1)$-dimensional topological orders from $k$-dimensional generalized symmetries. The generality of the method is demonstrated through a number of examples in different spatial dimensions and with various symmetry types. 

The examples that we have considered so far are still limited, especially in the cases of anomalous, nonabelian, and noninvertible symmetries. Therefore, more examples are needed in future works to further test our method. We hope that during this process, our method can give rise to more interesting lattice models with topological orders. 

Under certain circumstances, the process of gauging can be efficiently realized in quantum experiment platforms by combining unitary gates, measurements, and feedforward \cite{Nat2021KWfromMeasurement,Ashkenazi2022Duality,Bravyi2022PrepareSolvableQD}. Hence, another future direction is to explore whether our layered gauging construction can lead to new state preparation protocols. 

From the perspective of quantum error correcting (QEC) codes, our bulk construction method may be regarded as a generalization of the hypergraph product \cite{Zemor2009HGP}. More precisely, the prescription of gauging bilayer symmetries acting on nearest-neighbor pairs of layers match with the checks of the repetition code (equivalently, spin interaction terms of the $1d$ classical Ising model). Therefore, each topological order constructed in this paper may be regarded as a product between the repetition code ($1d$ classical Ising model) and another quantum lattice model with spontaneous symmetry breaking. It will be interesting to see whether this relation can be made more precise and whether new QEC codes beyond the Calderbank-Shor-Steane type (those with all-$Z$ and all-$X$ stabilizers) can be produced. For example, replacing the repetition code by some other classical error correcting codes, we may define a more general layered gauging prescription. The Hamiltonian on each layer may also become nonlocal if we aim at QEC codes with better performance.

\section*{Acknowledgments}
The author would like to thank Xie Chen, Yu-An Chen, Aleksander G\l\'{o}dkowski, Yingfei Gu, Ruochen Ma, Xiao-Qi Sun, Zongyuan Wang, and Yijian Zou for helpful discussions. This work was supported by the Chinese Academy of Sciences (CAS) under Grant No.\,YSBR-150 and a startup fund from the Institute of Physics, CAS. 

\appendix
\section{Gauging a Finite Group Symmetry}\label{app:GaugingReview}
In this section, we will briefly review the procedure for gauging an onsite finite group symmetry. A useful reference on this topic is Ref.\,\onlinecite{Verstraete2015Gauging}. 

\begin{figure}
    \centering
    \includegraphics[scale=\ScalingFactor]{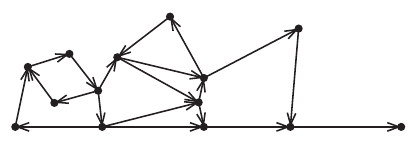}
    \caption{Example of a lattice consisting of vertices (sites) and oriented edges (links). }
    \label{fig:LatticeExample}
\end{figure}
Consider a lattice $\Lambda$ consisting of several vertices (sites) $v$ and edges (links) $e$, such as the example in Fig.\,\ref{fig:LatticeExample}. Our starting point is a quantum theory living on the vertices -- the theory of matter fields. More precisely, there is a local Hilbert space $\mc L_v$ on each vertex $v\in\Lambda$. such that the total matter field Hilbert space $\mc L_{\rm matter}=\bigotimes_{v\in\Lambda}\mc L_v$ is the tensor product of all those local ones. The quantum theory is described by a Hamiltonian $H$ acting on the Hilbert space $\mc L_{\rm matter}$. We assume that $H$ enjoys a finite symmetry group $G$. The symmetry action is $U_\Lambda (g)=\prod_{v\in \Lambda} U_v(g)$ where $g\in G$ and $U_v(\cdot)$ is the local unitary representation of $G$ on the vertex $v$. 

Our goal is to gauge the symmetry $G$, that is, promoting the global symmetry to local gauge symmetries. To this end, we need to introduce gauge field degrees of freedom living on the edges. The local Hilbert space on each edge is spanned by the orthonormal group element basis $\{\ket{g}|g\in G\}$. We choose an orientation for each edge and define a convenient rule: Reversing the
direction of a particular edge is equivalent to the basis change $\ket{g}\mapsto\ket{g^{-1}}$ on that edge. This rule will be useful for us to define new operators as we will only need to specify the operator action for a particular choice of the edge orientations. We will henceforth denote by $\mc L_e$ the local Hilbert space associated with the oriented edge $e$. The total gauge field Hilbert space is then given by $\mc L_{\rm gauge}=\bigotimes_{e\in\Lambda}\mc  L_e$. 
The starting and ending vertices of an edge $e$ will be denoted by $v_+(e)$ and $v_-(e)$, respectively. The set of all edges containing a vertex $v$ will be denoted by $E(v)$. 

Let $e$ be an arbitrary edge containing a vertex $v$ and let $h\in G$, we introduce an operator $X_{e,v}(h)$ defined as follows: 
\begin{align}
    X_{e,v}(h)\bigg|\includegraphics[valign=c,scale=\ScalingFactor]{InwardEdge.pdf}\bigg\rangle
    =\bigg|\includegraphics[valign=c,scale=\ScalingFactor]{InwardXAction.pdf}\bigg\rangle, 
\end{align}
where $g\in G$ labels the group element basis state. The case of the other edge orientation follows from the previous orientation reversing rule. More explicitly, $X_{e,v_-(e)}(h)\ket{g}_e=\ket{hg}_e$ and $X_{e,v_+(e)}(h)\ket{g}_e=\ket{gh^{-1}}_e$. Note that $X_{e,v_+(e)}(g)$ and $X_{e,v_-(e)}(h)$ commute. 
We can now introduce the local gauge symmetry generator $S_v(g)$ with the group element $g$ and around the vertex $v$ as
\begin{align}
    S_v(g):=U_v(g)\prod_{e\in E(v)}X_{e,v}(g)
\end{align}
which acts in $\mc L_{\rm matter}\otimes \mc L_{\rm gauge}$. 

The procedure for gauging the global symmetry $G$ consists of the following steps:
\begin{enumerate}
    \item Decompose the Hamiltonian $H$ into a sum of symmetric local operators (We assume there is a notion of locality and that $H$ is local). Apply a gauging map to each local operator $O$ so that the resulting operator $O'$ commutes with all gauge symmetry generators $S_v(g)$. 

    \item Optionally add flux terms into the gauged Hamiltonian obtained from the previous step if the lattice $\Lambda$ contains small loops. These new terms commute with all existing terms and with the gauge transformations $S_v(g)$. 

    \item Restrict the full Hilbert space $\mc L_{\rm matter}\otimes\mc L_{\rm gauge}$ to the gauge invariant subspace satisfying the Gauss's law constraints $S_v(g)=1$ for all $v\in\Lambda$ and $g\in G$. 
\end{enumerate}
In what follows, we will explicitly define the gauging map and the flux terms mentioned in the first and second steps, respectively. 

We will start by defining the flux terms. Let $f$ be an oriented loop along the edges and containing a vertex $v$ (see the picture in the equation below). The edges on the loop may or may not have consistent orientations with the loop. We define a projector $B_{f,v}(h)$ with $h\in G$ by the following equation. 
\begin{align}
    B_{f,v}(h)\Bigg|\includegraphics[valign=c,scale=\ScalingFactor]{LoopHolonomy.pdf}
    \Bigg\rangle
    =\delta_{g_1g_2g_3g_4,h}\Bigg|\includegraphics[valign=c,scale=\ScalingFactor]{LoopHolonomy.pdf}
    \Bigg\rangle. 
\end{align}
Here, we are showing a loop with four edges and all edge orientations are consistent with the loop orientation indicated by the circular arrow. The generalization to other cases should be clear given our edge orientation reversing rule stated previously. The product $g_1g_2g_3g_4$ in the above example is called the holonomy. $B_{f,v}(h)$ is the projector onto a particular value of the holonomy. One can check that $B_{f,v}(h)$ commutes with all $S_{v'}(g)$ with $v'\neq v$ and 
\begin{align}
    S_v(g)B_{f,v}(h)S_v(g)^\dagger=B_{f,v}(ghg^{-1}). 
\end{align}
Let $C\subset G$ be a conjugacy class, that is $C=\{ ghg^{-1}|g\in G \}$ for some fixed $h$. The following combination commutes with all $S_v(g)$: 
\begin{align}
    B_f(C):=\sum_{h\in C}B_{f,v}(h)
\end{align}
which projects onto a particular conjugacy class of the holonomy. In fact, $B_f(C)$ does not depend on the base point $v$ of the loop $f$ and hence we have omitted it from the notation. Since $B_f(C)$ is hermitian and commutes with all gauge symmetry generators, it can be added to the gauged Hamiltonian. Such terms are called flux terms, as the conjugacy class of holonomy is the analog of the exponentiated magnetic flux $e^{i\Phi}$ in U(1) gauge theories. In particular, the identity element $1\in G$ forms a conjugacy class by itself and it is a common choice to add $-uB_f(\{1\})$ with $u>0$ to the gauged Hamiltonian. These terms energetically favor trivial fluxes, in analogy with the magnetic Maxwell term in U(1) gauge theories. 

Let us now define the gauging map that acts on operators. Let $O$ be a  symmetric operator in the original matter field Hamiltonian. We would like to find a gauged operator $O'$ that satisfy the following properties. 
\begin{itemize}
    \item $O'$ should commute with all gauge symmetry generators $S_v(g)$. 

    \item In the subspace where all gauge field degrees of freedom are in the trivial state $\ket{1}$, $O'$ should reduce to $O$. 

    \item If $O$ is local, $O'$ should also be local. 
\end{itemize}
The first and second properties actually uniquely determine the action of $O'$ on flux free states. This is because any gauge field configuration with trivial flux in all loops can be mapped to the trivial configuration (all $\ket{1}$) by the action of $S_v(g)$ operators. However, the action of $O'$ on more general states are not fixed by the above requirements and the options are not unique. We will not attempt to classify all possibilities but instead give an explicit construction following Ref.\,\onlinecite{Verstraete2015Gauging}. 

The construction requires choosing a sublattice $\Gamma\subset \Lambda$ that contains the support of $O$. Furthermore, we require $\Gamma$ to satisfy two properties: (1) $\Gamma$ contains both vertices $v_{\pm}(e)$ of all its edges but not necessarily all edges of its vertices, and (2) $\Gamma$ is connected. 
Let $\{g_v|v\in\Gamma\}$ be a collection of group elements corresponding to every vertex in $\Gamma$, we define an operator $S_\Gamma(\{g_v\})$ as follows. 
\begin{align}
    S_\Gamma(\{g_v|v\in\Gamma\}):=\prod_{v\in\Gamma}U_v(g_v)\prod_{e\in E(v)\cap E(\Gamma)}X_{e,v}(g_v), 
\end{align}
where as mentioned previously, $E(v)$ is the set of all edges in $\Lambda$ containing the vertex $v$, and $E(\Gamma)$ is the set of all edges in $\Gamma$. The gauging map $\mc G_\Gamma$ acting on the operator $O$ is given by
\begin{align}
    O'&\equiv\mc G_\Gamma[O]\nonumber\\
    &=\frac{1}{|G|}\sum_{\{g_v|v\in\Gamma\}}S_\Gamma(\{g_v\})\left( O\bigotimes_{e\in\Gamma}\ket{1}_e\bra{1}_e \right)S_\Gamma(\{g_v\})^\dagger. 
    \label{eq:GaugingMapDef}
\end{align}
Recall that $O$ is an operator acting in $\mc L_{\rm matter}$ (sites) while after gauging, $O'$ acts in $\mc L_{\rm matter}\otimes \mc L_{\rm gauge}$ (both sites and edges). Implicitly in the equation above, $O'$ acts trivially on all edges not in $\Gamma$. Hence, if $\Gamma$ is local, so is $O'$. 
One can check that $O'$ has local gauge symmetries, i.e. it commutes with $S_v(g)$ for all $v\in\Lambda$ and $g\in G$. In fact, the construction is just to symmetrize the operator $O\bigotimes_{e\in\Gamma}\ket{1}_e\bra{1}_e$. With some algebra, one can also verify that 
\begin{align}
    \mc G_\Gamma[O]\left(\bigotimes_{e\in \Gamma}\ket{1}_e\right)=\left(\bigotimes_{e\in \Gamma}\ket{1}_e\right)\otimes O, 
\end{align}
where we have utilized the two conditions on the sublattice $\Gamma$ and the global $G$-invariance of $O$. Therefore, the gauged operator $O'$ given in \eqref{eq:GaugingMapDef} indeed meets our requirements. 

Let us mention a few properties of the gauging map $\mc G_\Gamma$ and the gauged operator $O'=\mc G_\Gamma[O]$. 
\begin{enumerate}
    \item Let $f\subset \Gamma$ be a loop contained in $\Gamma$. We have 
    \begin{align}
        B_f(\{1\})O'=O'B_f(\{1\})=O'. 
        \label{eq:GaugingMapProp1}
    \end{align}
    Hence, $O'$ contains projections onto trivial flux sectors for all loops in $\Gamma$. This explains why we should not choose $\Gamma$ to be the whole lattice $\Lambda$: In that way, $O'$ will generically be nonlocal. It also follows that $O'$ does depend on the choice of the sublattice $\Gamma$. 

    \item Let $O_1$ and $O_2$ be two symmetric operators both acting within $\Gamma$. One can check that 
    \begin{align}
        \mc G_\Gamma[O_1]\mc G_\Gamma[O_2]=\mc G_\Gamma[O_1O_2]. 
        \label{eq:GaugingMapProp2}
    \end{align}

    \item If $\Gamma$ is a tree-like sublattice, i.e. one that does not contain any loop, then 
    \begin{align}
        \mc G_\Gamma[1]=1. 
        \label{eq:GaugingMapProp3}
    \end{align}
    This and the previous property imply that if a set of operators $\{O_k\}$ forms a representation of a group $K$ and if $\Gamma$ is tree-like, then the set of gauged operators $\{O_k'\}$ also forms a representation of $K$. 

    \item If $\Gamma_1$ and $\Gamma_2$ are two tree-like sublattices such that $\Gamma_1\subset \Gamma_2$, then
    \begin{align}
        \mc G_{\Gamma_1}[O]=\mc G_{\Gamma_2}[O]
        \label{eq:GaugingMapProp4}
    \end{align}
    for any symmetric operator $O$ acting within $\Gamma_1$. 
\end{enumerate}

When the whole lattice $\Lambda$ is a $1d$ chain, the gauging map on local operators is unique and particularly nice. These are summarized as the following proposition and have been used in the main text. 
\begin{proposition*}
\textbf{(Gauging on a 1d Chain)}
    Suppose the whole lattice $\Lambda$ is a 1d chain or any other lattice without local loops. For any local operator $O$ acting on vertices (the matter field Hilbert space), there is a unique local operator $O'$ that commutes with all gauge symmetry generators $S_v(g)$ and reduces to $O$ on the trivial gauge field configuration $\bigotimes_e\ket{1}_e$. Moreover, this gauging map $(\cdot)'$ satisfies $O_1'O_2'=(O_1O_2)'$ and $1'=1$. 
\end{proposition*}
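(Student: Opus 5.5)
Existence comes from the explicit gauging map \eqref{eq:GaugingMapDef}; uniqueness comes from the fact that on a loop-free lattice every gauge-field configuration can be gauged to the trivial one. For existence, given a local $O$, I take $\Gamma$ to be the smallest connected sublattice containing the support of $O$. On a chain this is an interval, which is tree-like because $\Lambda$ has no local loops. Then $O'=\mc G_\Gamma[O]$ is local, commutes with every $S_v(g)$, and reduces to $O$ on $\bigotimes_e\ket{1}_e$, as already checked in the text. By property \eqref{eq:GaugingMapProp4}, enlarging $\Gamma$ to any larger tree-like sublattice gives the same operator. So the construction does not depend on this choice, and this also prepares the multiplicativity argument.

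For uniqueness, I first argue that every gauge-field basis configuration $\bigotimes_e\ket{g_e}_e$ is carried to the trivial configuration by some product of gauge transformations $\prod_v S_v(h_v)$. Starting from one end of the chain (or from a root of the tree), I choose $h_v$ vertex by vertex so that each edge in turn becomes $\ket{1}$. This is possible precisely because there are no loops, so no holonomy obstruction arises. For a finite chain with periodic boundary conditions there is one global loop, and I would handle it by working on a finite open interval containing the supports of both candidate operators. Locality is what makes this legitimate.

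Now suppose $O'_a$ and $O'_b$ are two local operators with both properties. Their difference $\Delta$ commutes with all $S_v(g)$ and vanishes on the subspace $\mathcal L_{\rm matter}\otimes\bigotimes_e\ket{1}_e$. For an arbitrary basis state with gauge configuration $\{g_e\}$, I write it as $W\,(\ket{\psi}\otimes\bigotimes_e\ket{1}_e)$, where $W$ is the appropriate product of $S_v$ (each $S_v$ also acts on matter, which is absorbed into $\ket\psi$). Then $\Delta W(\cdots)=W\Delta(\cdots)=0$, so $\Delta=0$ on the full space.

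The step I expect to be the main obstacle is making the "reduces to $O$" condition precise. $O'$ need not preserve the trivial-gauge subspace, so the requirement must be read as $O'(\ket\psi\otimes\bigotimes\ket1)=(O\ket\psi)\otimes\bigotimes\ket1$, which is exactly the identity displayed before \eqref{eq:GaugingMapProp1}. Only with this reading does the argument above give full equality.

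Finally, the algebraic properties follow from uniqueness. The operator $O_1'O_2'$ is local, commutes with all $S_v(g)$, and acts on trivial-gauge states as $O_1O_2$, because each factor maps trivial-gauge states to trivial-gauge states. By uniqueness, $O_1'O_2'=(O_1O_2)'$. Likewise the identity operator satisfies both defining conditions for $O=1$, so $1'=1$. Both facts are also consistent with \eqref{eq:GaugingMapProp2} and \eqref{eq:GaugingMapProp3}.
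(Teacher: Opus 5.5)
Your proposal is correct. Existence and uniqueness follow the paper's own argument: the paper likewise takes a local loop-free $\Gamma$ containing both supports, gauge-transforms the gauge field inside $\Gamma$ to the trivial configuration, and concludes that $O'_{\rm I}-O'_{\rm II}$ annihilates every product state.

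You depart from the paper only in the algebraic part. The paper gets $O_1'O_2'=(O_1O_2)'$ and $1'=1$ by direct computation, from $\mathcal G_\Gamma[O_1]\mathcal G_\Gamma[O_2]=\mathcal G_\Gamma[O_1O_2]$ and from $\mathcal G_\Gamma[1]=1$ for tree-like $\Gamma$. You instead derive both from uniqueness: $O_1'O_2'$ is local, gauge invariant, and acts as $O_1O_2$ on trivial-gauge states.

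Your route needs no computation. It shows that multiplicativity follows from the characterization itself, not from the particular formula for $\mathcal G_\Gamma$. It also avoids the implicit step of placing $O_1'$ and $O_2'$ on a common tree-like $\Gamma$, which would need property (4) or uniqueness anyway. The paper's route has a different advantage: $\mathcal G_\Gamma[O_1]\mathcal G_\Gamma[O_2]=\mathcal G_\Gamma[O_1O_2]$ holds for any connected $\Gamma$, loops included. So it still works on lattices with local loops, where uniqueness fails ($O'$ and $O'B_f(\{1\})$ both satisfy the defining conditions there).

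You are right that ``reduces to $O$'' must be read as $O'(\ket\psi\otimes\bigotimes_e\ket1_e)=(O\ket\psi)\otimes\bigotimes_e\ket1_e$. If only the compression onto the trivial-gauge subspace is fixed, uniqueness fails: in the $\mathbb Z_2$ chain you could add the gauge-invariant operator $X_{j+1/2}$.

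One hypothesis should be stated explicitly. Existence requires $O$ to be $G$-symmetric. The gauging map assumes this, and gauge invariance of $O'$ in fact forces $O$ to commute with $U(g)$. The statement leaves this implicit too.
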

\begin{proof}
    Let $O'_{\rm I}$ and $O'_{\rm II}$ be two local operators that both commute with all $S_v(g)$ and both reduce to $O$ on the trivial gauge field configuration. We need to show that $O'_{\rm I}-O'_{\rm II}=0$. Consider a tensor product state of the form $\bigotimes_v\ket{\psi_v}_v\bigotimes_e\ket{g_e}_e$. Let $\Gamma$ be a local sublattice containing the supports of both $O'_{\rm I}$ and $O'_{\rm II}$. $\Gamma$ has no loop by assumption and therefore, we can use a sequence of $S_v(g)$ operators to map all gauge field states within $\Gamma$ to the trivial state $\ket{1}$. It follows that $O'_{\rm I}-O'_{\rm II}$ annihilates all such tensor product states and thus vanishes identically. 

    The relations $O_1'O_2'=(O_1O_2)'$ and $1'=1$ follow from Eqs.\,\ref{eq:GaugingMapProp2} and \ref{eq:GaugingMapProp3} which may be verfied by direct computations. 
\end{proof}

\section{Explicit Form of the Semion Loop Operator}\label{app:DS}
In Section \ref{sec:AnomalyExample}, we mentioned that the product of all $S^{ER}$ stabilizers in a 2$d$ region equals to a loop operator supported on the region boundary. An example of such a loop operator for a rectangular region is shown in Fig.\,\ref{fig:ERLoop}. 
\begin{figure*}
    \centering
    \includegraphics[scale=\ScalingFactor]{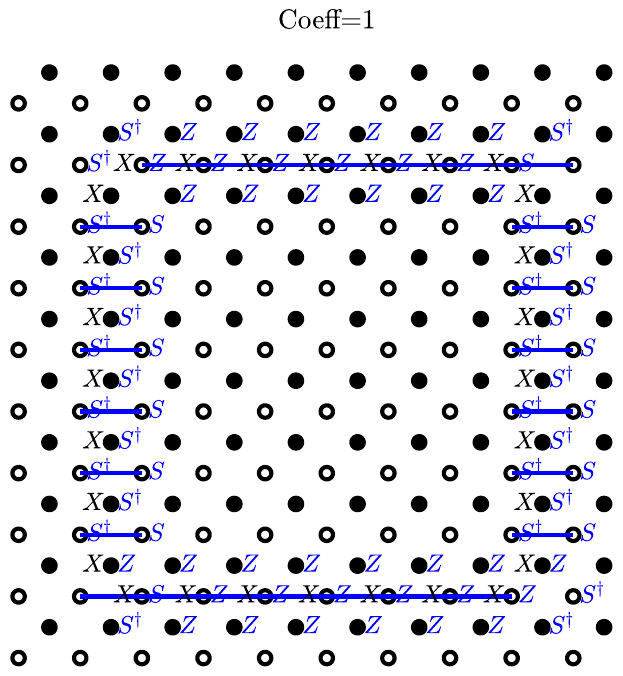}
    \caption{A $ER$-loop operator from the product of $S^{ER}$ stabilizers in a rectangular region. The overall coefficient, which is not important for our purpose, depends on the rectangular size and equals to $+1$ for the example shown. }
    \label{fig:ERLoop}
\end{figure*}

\bibliography{Bib_LayeredGauging}
\end{document}